\theoremstyle{plain}
\newtheorem{theorem}{Theorem}
\newtheorem{proposition}[theorem]{Proposition}
\newtheorem{lemma}[theorem]{Lemma}
\newtheorem{corollary}[theorem]{Corollary}
\theoremstyle{definition}
\newtheorem{example}[theorem]{Example}
\newcommand{\lfam}{\mathscr{L}}
\newcommand{\npda}{\textsf{NPDA}}
\newcommand{\dpdawtl}{\textsf{DPDAwtl}}
\newcommand{\nrdpdawtl}{\textsf{nrDPDAwtl}}
\newcommand{\nrdidpdawtl}{\textsf{nrDIDPDAwtl}}
\newcommand{\didpdawtl}{\textsf{DIDPDAwtl}}
\newcommand{\nrnidpdawtl}{\textsf{nrNIDPDAwtl}}
\newcommand{\nidpdawtl}{\textsf{NIDPDAwtl}}
\newcommand{\lba}{\textsf{LBA}}
\newcommand{\leftend}{\mathord{\vartriangleright}}
\newcommand{\rightend}{\mathord{\vartriangleleft}}
\newcommand{\border}{\texttt{\#}}
\newcommand{\dollar}{\texttt{\$}}
\newcommand{\cent}{\texttt{\textcent}}
\newcommand{\valc}{\textrm{VALC}}
\newcommand{\invalc}{\textrm{INVALC}}
\newcommand{\accept}{\texttt{accept}}
\newcommand{\compl}[1]{\overline{#1}}
\newcommand{\cyes}{\textcolor{PineGreen}{\ding{51}}}%
\newcommand{\cnos}{\textcolor{red}{\ding{55}}}%
\newcommand{\cno}{\textcolor{black}{\ding{53}}}%
\newcommand{\copen}{\textcolor{blue}{\textbf{?}}}
\newcommand{\empt}{\mathord{\bot}}
\newcommand{\eoe}{\ifmmode$\hspace*{\fill}$\blacksquare\else\hspace*{\fill}$\blacksquare$\fi\smallskip}
\title{Input-Driven Pushdown Automata with\\ Translucent Input Letters}
\author{Martin Kutrib \and Andreas Malcher \and Matthias Wendlandt
\institute{%
  Institut f\"ur Informatik, Universit\"at Giessen\\
  Arndtstr.~2, 35392 Giessen, Germany}
\email{$\{$kutrib, andreas.malcher, matthias.wendlandt$\}$@informatik.uni-giessen.de}
}
\begin{document}

\maketitle

\begin{abstract}
Input-driven pushdown automata with translucent input letters are investigated.
Here, the use of translucent input letters means that the input is processed in several sweeps
and that, depending on the current state of the automaton, some input symbols are visible and can be processed, 
whereas some other symbols are invisible, and may be processed in another sweep.
Additionally, the returning mode as well as the non-returning mode are considered, where in the former
mode a new sweep must start after processing a visible input symbol. Input-driven pushdown
automata differ from traditional pushdown automata by the fact that the actions on the pushdown store
(push, pop, nothing) are dictated by the input symbols.
We obtain the result that the input-driven nondeterministic model is computationally stronger than the deterministic model
both in the returning mode and in the non-returning mode, whereas it is known that the deterministic and the nondeterministic
model are equivalent for input-driven pushdown automata without translucency. It also turns out that the non-returning
model is computationally stronger than the returning model both in the deterministic and nondeterministic case.
Furthermore, we investigate the closure properties of the language families introduced under the Boolean operations. 
We obtain a complete picture
in the deterministic case, whereas in the nondeterministic case the language families are shown to be not closed under
complementation. Finally, we look at decidability questions and obtain the non-semidecidability of the
questions of universality, inclusion, equivalence, and regularity in the nondeterministic case. 
\end{abstract}

\section{Introduction}

The usual way of processing input on language recognition devices is by reading input strings from left to right, one symbol at a time, 
and finally providing an accept or reject decision when arriving at the end of the input.
This ``standard'' input mode has been extended in the literature in several directions. For example, 
two-way motion of the input head, stationary moves of the input head, rotating the input head, or restarting modes 
have been considered for a wide variety of machines (the reader is referred to, e.g., 
\cite{bensch:2009:irdnfa,Hopcroft:1979:itatlc:book,Jancar:1995:ra:proc,otto:2025:book} for an overview of these and other models).
In all these cases, the extensions have consequences on the computational and descriptional power of the machines.
Thus, the way of processing the input can be considered as a computational resource that then can be used to tune 
computational and descriptional power.

Of particular interest in recent literature have been \emph{discontinuous} ways of input processing, where one of them
is the ``jumping'' paradigm which means that jumping to any position inside the input string is allowed at any move. 
This paradigm has been investigated for finite automata in~\cite{meduna:2012:jfa}, where it is shown that this discontinuous
input processing may increase the computational power since some non-context-free languages can be accepted. On the other hand,
the discontinuous input mode may also limit the computational power since some regular languages cannot be accepted.
A restricted variant, called ``right one-way jumping'' automata, has been considered in~\cite{beier:2022:nrowjfa,chigahara:2016:owjfa}. 
Another way to discontinuously process the input is to use \emph{translucent letters}. 
This concept  has been introduced by Nagy and Otto in~\cite{nagy:2011:fsawtl} for deterministic and nondeterministic finite automata
and the basic idea for translucent devices is to provide a translucency function that defines in which states which letters of the input are translucent. 
At each move, the device skips (by looking through) the translucent portion of the input, from the
current input head position up to the first non-translucent letter (thus realizing a jump).
After processing the non-translucent symbol, in the \emph{returning mode} the
input head returns to the left end of the input while, in the \emph{non-returning mode}, the input head continues to process the input 
according to its updated current state and the corresponding translucent symbols. In both modes, the input head returns to the left end when the right
end of the input is reached. 

Deterministic and nondeterministic finite automata with translucent letters have deeply been investigated in the literature 
(see, e.g.,~\cite{mraz:2023:nrfawtl,Nagy:2012:ocdsosdrawws,otto:2023:asoawtl:proc}). However, many questions are still open. 
Some recently studied variants are finite automata with translucent words~\cite{Nagy:2024:fawsotw}, finite automata with translucent letters and
two-way motion of the input head~\cite{kutrib:2025:twfawtil:proc}, and returning and non-returning deterministic pushdown automata with translucent 
letters~\cite{kutrib:2024:dpdawtl:proc,Kutrib:2024:opolabdpawtil}.
It should be noted that returning pushdown automata with translucent letters have been studied by Nagy and Otto 
in~\cite{nagy:2011:cdsosdragbaeps,Nagy:2013:dpcdsosdra} in terms of certain cooperating distributed systems of restarting automata with 
additional pushdown store. However, in their model $\lambda$-transitions are not allowed and acceptance is defined by empty pushdown.
In~\cite{Otto:2015:ovptl} the paradigm of \emph{input-driven} languages was imposed to certain cooperating distributed systems of restarting automata 
with additional pushdown store with regard to characterizing certain trace languages. 
In input-driven pushdown automata~\cite{braunmuehl:1985:idlrlog-coll,mehlhorn:1980:pmradcflr} the next action on the pushdown store (push, pop, nothing) 
is solely governed by the input symbol and to this end the input alphabet is split into three subsets. 
Input-driven pushdown automata possess nice features (see, e.g., the survey given in~\cite{okhotin:2014:cidpda}) such as 
the equivalence of nondeterministic and deterministic
models, the positive decidability of the inclusion problem and the positive closure under union, intersection, complementation, concatenation, and iteration.
It should be noted that the positive decidability result as well as the positive closures only hold if both given automata have a compatible splitting of
their input alphabets. In this paper, we study input-driven deterministic and nondeterministic pushdown automata with translucent letters
both in the returning and non-returning mode and, therefore, extend the results from~\cite{Otto:2015:ovptl}. 
We also study the closure properties of the four corresponding language families under the Boolean operations 
as well as the status of their decidability questions. 

The paper is organized as follows. After giving the necessary definitions and two illustrating examples in the next section, 
we study in Section~\ref{sec:det-vs-ndet} the impact of nondeterminism in our models and it turns out
that the nondeterministic model is computationally stronger in the returning mode as well as in the
non-returning mode. In Section~\ref{sect:ret} we compare returning and non-returning models. 
We yield proper inclusions of the returning language families in the non-returning language families
in the deterministic as well as in the nondeterministic case. Moreover, the combination deterministic and
non-returning versus the combination nondeterministic and returning leads to incomparability results.
The closure under the Boolean operations is studied in Section~\ref{sec:closure} and we obtain the
closure under complementation, but non-closure under union and intersection in the returning and
non-returning deterministic case. In the nondeterministic case, we obtain non-closure under complementation
for both variants and non-closure under intersection (even with regular languages) in the returning case.
Finally, we consider the usually studied decidability questions in Section~\ref{sect:deci}.
In the returning case we have the decidability of the emptiness problem as well as of the finiteness problem both
in the nondeterministic and deterministic case. In addition, universality is decidable in the deterministic
case. On the other hand, we get the non-semidecidability of the questions of universality, equivalence, inclusion, 
and regularity in case of input-driven nondeterministic pushdown automata with translucent letters working in the 
returning as well as in the non-returning mode.
This extends some results partially known for nondeterministic finite automata with translucent input letters
working in the returning mode (see~\cite{Nagy:2012:ocdsosdrawws}).

\section{Definitions and Preliminaries}\label{sec:prelim}

We denote by $\Sigma^*$ the set of all words on the finite alphabet $\Sigma$, including the 
\emph{empty word} $\lambda$, and let
$\Sigma^+ = \Sigma^* \setminus \{\lambda\}$. For any word $w\in\Sigma^*$, we let 
$|w|$ denote its \emph{length}, $w^R$ its \emph{reversal}, and $|w|_a$ the \emph{number of occurrences} of the 
symbol $a\in\Sigma$ in $w$. 
We use $\subseteq$ for \emph{inclusions}, and $\subset$ for \emph{proper inclusion}. 
Given a set $S$, we denote by $2^S$ its \emph{power set}, and by $|S|$ its \emph{cardinality}. 
We write $S_x$ to denote the set $S\cup\{x\}$, for a given element $x\not\in S$.
A \emph{language} on  $\Sigma$ is any subset $L \subseteq \Sigma^*$. The \emph{complement} of $L$
is the language $\compl{L}=\Sigma^* \setminus L$, its reversal is $L^R=\{\,w^R \mid w\in L\,\}$.
The \emph{shuffle} $L_1 \shuffle L_2$ of two languages $L_1, L_2 \subseteq \Sigma^*$ is defined as
$L_1 \shuffle L_2 = \{\, x_1y_1x_2y_2\cdots x_ny_n \mid x_1x_2 \cdots x_n \in L_1, y_1y_2 \cdots y_n \in L_2
\mbox{ with } x_i,y_i \in \Sigma^* \mbox{ and } 1\le i\le n\,\}$.
Let $\Sigma=\{a_1,a_2,\ldots,a_n\}$ and $\Psi : \Sigma^* \rightarrow \mathbb{N}_0^n$ be a mapping
such that $\Psi(w)=(|w|_{a_1},|w|_{a_2},\ldots,|w|_{a_n})$. Let $L \subseteq \Sigma^*$ be a language.
Then, $\Psi(L)=\{ \Psi(w) \mid w \in L\}$ is the \emph{Parikh image} of~$L$. We say that two
languages $L_1,L_2 \subseteq \Sigma^*$ are \emph{letter-equivalent} if $\Psi(L_1)=\Psi(L_2)$.
Two language families $\mathscr{L}_1$ and~$\mathscr{L}_2$
are said to be \emph{incomparable} whenever~$\mathscr{L}_1$ is not a subset of~$\mathscr{L}_2$
and vice versa.

A classical pushdown automaton is called input-driven
if the current input symbol defines the next action on the pushdown store, that is,
pushing a symbol onto the pushdown store,
popping a symbol from the pushdown store, or changing the  
state without modifying the pushdown store. 
To this end, the input alphabet~$\Sigma$ is partitioned
into the sets~$\Sigma_D$,~$\Sigma_R$, and~$\Sigma_N$,
that control the actions push~($D$), pop~($R$), and state change only~($N$).

Input-driven pushdown automata with translucent input letters are extensions of 
input-driven pushdown automata that do not necessarily have to read the
current input symbol.
Instead, depending on the current state of such devices, some of the input
letters may be translucent (invisible). Accordingly, an input-driven pushdown automaton
with translucent input letters either reads and processes 
(by deleting, if not the endmarker) the first visible input letter.

\begin{sloppypar}
Formally, an \emph{input-driven pushdown automaton with translucent input
letters $(\nidpdawtl)$} is a system 
$M=\langle Q,\Sigma, \Gamma, q_0, \rightend, \bot, \tau, \delta_D, \delta_R, \delta_N\rangle$, 
where
$Q$ is the finite set of \emph{internal states},
$\Sigma$ is the finite set of \emph{input symbols} partitioned into the sets 
$\Sigma_D$, $\Sigma_R$, and~$\Sigma_N$, with
$\Sigma \cap Q=\emptyset$,
$\Gamma$ is the finite set of \emph{pushdown symbols},
$q_0 \in Q$ is the \emph{initial state},
$\rightend\notin \Sigma$ is the {endmarker},
$\bot \notin \Gamma$ is the \emph{bottom-of-pushdown symbol},
\mbox{$\tau\colon Q\to 2^\Sigma$} is the {translucency mapping}, and
$\delta_D$ is the partial transition function mapping 
   \mbox{$Q \times \Sigma_D\times (\Gamma \cup\{\empt\})$} to $2^{Q \times \Gamma}\cup\{\accept\}$,
$\delta_R$ is the partial transition function mapping 
  $Q \times \Sigma_R \times (\Gamma \cup\{\empt\})$ to~$2^Q\cup\{\accept\}$, and
$\delta_N$ is the partial transition function mapping 
  $Q \times (\Sigma_N\cup\{\rightend\}) \times (\Gamma \cup\{\empt\})$ to~$2^Q\cup\{\accept\}$.
\end{sloppypar}

The translucency mapping $\tau$ bears the following meaning:
for any state $q\in Q$, the letters from the set $\tau(q)$ are 
\emph{translucent (invisible) for $q$}, that is, whenever in $q$, the automaton $M$ does
not see these letters (or equivalently, $M$ sees through such letters).

A \emph{configuration} of $M$ is a pair $(q w\rightend,\gamma)$ or $\accept$, 
where $q \in Q$ is the current state, $w\in \Sigma^*$ is the {part of the
input} left to be processed, and
\mbox{$\gamma \in \Gamma^*\bot$} denotes the current pushdown content, the leftmost symbol being
 the top of the pushdown store.
The \emph{initial configuration} for an input~$w$ is set to $(q_0 w\rightend,\empt)$.

Being in some configuration $(q w\rightend, z\gamma)$ with
$z\in\Gamma\cup\{\empt\}$ and $z\gamma\in \Gamma^*\empt$, first~$M$ determines the next symbol to scan.
Precisely, if~$w\rightend=xy\rightend$ with $x\in \tau(q)^*$, $y\in\Sigma^*$,
and $a\notin \tau(q)$ is the first symbol of $y\rightend$,
then $M$ processes~$a$.
One step from a configuration to its {successor configuration} is denoted
by~$\vdash$.

Let $q,q'\in Q$, $x\in \tau(q)^*$, $a\notin \tau(q)$, $y\in\Sigma^*$, and
$z\in \Gamma$, $\gamma\in \Gamma^*\empt$. 
We set
\begin{enumerate}
\item 
  $(q xay \rightend,z\gamma) \vdash (q' xy\rightend,z'z\gamma)$, 
  if $a \in \Sigma_D$ and $(q',z') \in \delta_D(q,a,z)$, 
\item 
  $(q xay \rightend,\empt) \vdash (q' xy\rightend,z'\empt)$, 
  if $a \in \Sigma_D$ and $(q',z') \in \delta_D(q,a,\empt)$, 
\item 
  $(q xay \rightend,z\gamma) \vdash (q' xy\rightend,\gamma)$, 
  if $a \in \Sigma_R$ and $q'\in \delta_R(q,a,z)$, 
\item 
  $(q xay \rightend,\empt) \vdash (q' xy\rightend,\empt)$, 
  if $a \in \Sigma_R$ and $q'\in \delta_R(q,a,\empt)$, 
\item 
  $(q xay \rightend,z\gamma) \vdash (q' xy\rightend, z\gamma)$, 
  if $a \in \Sigma_N$ and $q'\in \delta_N(q,a,z)$, 
\item 
  $(q xay \rightend,\empt) \vdash (q' xy\rightend,\empt)$, 
  if $a \in \Sigma_N$ and $q'\in \delta_N(q,a,\empt)$, 
\item 
  $(q x \rightend,z\gamma) \vdash (q' x\rightend, z\gamma)$, 
  if $q'\in \delta_N(q,\rightend,z)$, 
\item 
  $(q x \rightend,\empt) \vdash (q' x\rightend,\empt)$, 
  if $q'\in \delta_N(q,\rightend,\empt)$. 
\end{enumerate}
In addition, whenever, the transition function yields $\accept$ for the current
configuration, the successor configuration is $\accept$. 

So, on the endmarker only $\delta_N$ is defined.
Whenever the pushdown store is empty, the successor configuration is computed
by the transition functions with the special bottom-of-pushdown symbol~$\empt$
which is never removed from the pushdown.
As usual, we define the reflexive and transitive closure of $\vdash$ by
$\vdash^*$.

\begin{sloppypar}
An input-driven pushdown automaton with translucent letters is said
to be deterministic ($\didpdawtl$)
if $|\delta_x(q,a,z)|\leq 1$ for $x\in\{D,N,R\}$ and all $q\in Q$, $a\in\Sigma_x$, and
$z\in \Gamma \cup\{\bot\}$.
\end{sloppypar}

A word $w$ is accepted by $M$ if there is a computation on input $w$ that ends
with $\accept$.
The \emph{language accepted} by $M$ is
$L(M)=\{\, w\in \Sigma^* \mid w \text{ is accepted by } M\,\}$.
In general, the family of all languages accepted by automata of 
some type~$\mathsf{X}$ will be denoted by $\lfam(\mathsf{X})$.

Some properties of language families implied by classes of 
input-driven pushdown automata may depend on whether 
all automata involved share the same partition of the input
alphabet. For easier writing, we call the partition of an input alphabet 
a \emph{signature}.

In order to clarify these notions, we continue with an example.

\begin{example}\label{exa:rep}
Let $\Sigma=\{a,b,\border\}$ be an alphabet. The language
$L_{\textsl{rep}}\subseteq \Sigma^*$ is defined as
$
L_{\textsl{rep}}= \{\, b^n (\border b^n)^k \shuffle a^n \mid n\geq 1, k\geq 0\,\}
$. Its complement $\compl{L}_{\textsl{rep}}$ belongs to the family $\lfam(\nidpdawtl)$.
It can be represented as union $L\cup L'$ where $L'$ is the complement of the regular language
$\{\, b^+ (\border b^+)^k \shuffle a^+ \mid k\geq 0\,\}$ with respect to
$\Sigma$, and
$$
L=\{\, b^{n_1} \border b^{n_2} \border \cdots \border b^{n_k} \shuffle a^n \mid k\geq 0,
n,n_i\geq 1 \text{ for } 1\leq i\leq k, \text{ and there exists } 1\leq i\leq
k \text{ such that } n_i\neq n\,\}.
$$

An $\nidpdawtl$ accepting $\compl{L}_{\textsl{rep}}$ can initially guess
whether it wants to accept $L$ or $L'$. Since $L'$ is regular it is accepted
by some $\nidpdawtl$ that does not utilize its pushdow store. So, it does not
care about what actually happens on the pushdown store. This means that it may
have an arbitrary signature.

Now, we construct an $\nidpdawtl$
$M=\langle Q,\Sigma, \Gamma, q_0, \rightend, \bot, \tau, \delta_D, \delta_R,
\delta_N\rangle$ accepting $L$ as follows.
\begin{itemize}
\item $Q=\{q_0,q_1,p,r\}$,

\item $\Sigma_D =\{b\}$, $\Sigma_R =\{a\}$, $\Sigma_N =\{\border\}$,
\item $\Gamma=\{\bullet,B\}$, 

\item 
$\tau(q_0)=\tau(q_1)=\tau(p)=\{a\}$,\quad $\tau(r)=\{\border,b\}$.
\end{itemize}

In a first phase, $M$ reads the input symbols $b$ and $\border$ from left to
right with symbols $a$ translucent. At the beginning and whenever a $\border$
is read, $M$ guesses whether the length of the next $b$-block has to be
matched with the number of $a$'s in the input. If not, $M$ scans the $b$-block
and pushes $\bullet$'s in state $q_1$. If yes, $M$ enters state $p$ and scans
the $b$-block as well but now pushing $B$'s. If $M$ never guesses yes, the
computation is rejected on the endmarker. Otherwise, on reading the next
$\border$ or the endmarker, $M$ enters state $r$. In this situation, the
number of $B$'s on top of the pushdown corresponds to the length of the
$b$-block guessed to be matched. Finally, for state $r$ the symbols $b$ and
$\border$ are translucent. Now $M$ reads the $a$'s from left to right. For
each $a$ read, one $B$ is popped. If and only if there are more $a$'s than $B$'s or vice
versa then $M$ accepts.
So, we set:
\begin{center}
\renewcommand{\arraystretch}{1.1}\tabcolsep2pt
\begin{tabular}[t]{rccl}
(1) &  $\delta_D(q_0,b,\bot)$ &=& $\{(q_1,\bullet),(p,B)\}$,\\
(2) &  $\delta_D(q_1,b,\bullet)$ &=& $\{(q_1,\bullet)\}$,\\
(3) &  $\delta_N(q_1,\border,\bullet)$ &=& $\{q_1,p\}$,\\
(4) &  $\delta_D(p,b,\bullet)$ &=& $\{(p,B)\}$,\\
(5) &  $\delta_D(p,b,B)$ &=& $\{(p,B)\}$,\\
(6) &  $\delta_N(p,\border,B)$ &=& $\{r\}$,\\
\end{tabular}
\qquad\qquad
\begin{tabular}[t]{rccl}
(7) &  $\delta_N(p,\rightend,B)$ &=& $\{r\}$,\\
(8) &  $\delta_R(r,a,B)$ &=& $\{r\}$,\\
(9) &  $\delta_R(r,a,\bullet)$ &=& $\accept$,\\
(10) &  $\delta_R(r,a,\bot)$ &=& $\accept$,\\
(11) &  $\delta_R(r,\rightend,B)$ &=& $\accept$.\\
\end{tabular}
\end{center}
\eoe
\end{example}

Deterministic pushdown automata with translucent letters have been defined in~\cite{kutrib:2024:dpdawtl:proc}
also for the \emph{non-returning} mode. In this mode, after a visible letter is processed, 
the input head does not return to the left end of the input, but it continues reading
from the position of the visible letter just processed. Whenever the endmarker is reached and the 
transition on the endmarker yields a new state, the computation is continued 
in this state, with the input head placed at the left end of the remaining input.
Such deterministic pushdown automata with translucent letters working in the
non-returning mode generalize non-returning finite state automata 
with translucent letters~\cite{mraz:2023:nrfawtl}. Here, we also consider
\emph{input-driven pushdown automata with translucent letters working in the
non-returning mode} ($\nrnidpdawtl,\nrdidpdawtl$).

Let
$M=\langle Q,\Sigma, \Gamma, q_0, \rightend, \bot, \tau, \delta_D, \delta_R, \delta_N\rangle$, 
be an $\nrnidpdawtl$. Now, a configuration of~$M$ is a pair $(uqw\rightend,\gamma)$ or $\accept$, 
where $q \in Q$ is {the current state}, $uw\in \Sigma^*$ is the {remaining part of the
input} with $w$ being to the right and $u$ to the left of the input head, and
$\gamma \in \Gamma^*\bot$ is the current pushdown content.
The successor configuration yielded by $\vdash$ is now specified
as follows.
Let
Let $q,q'\in Q$, $x\in \tau(q)^*$, $a\notin \tau(q)$, $u,y\in\Sigma^*$, and
$z\in \Gamma$, $\gamma\in \Gamma^*\empt$. Then:
\begin{enumerate}
\item 
  $(uq xay \rightend,z\gamma) \vdash (ux q' y\rightend,z'z\gamma)$, 
  if $a \in \Sigma_D$ and $(q',z') \in \delta_D(q,a,z)$, 
\item 
  $(u q xay \rightend,\empt) \vdash (ux q' y\rightend,z'\empt)$, 
  if $a \in \Sigma_D$ and $(q',z') \in \delta_D(q,a,\empt)$, 
\item 
  $(uq xay \rightend,z\gamma) \vdash (uxq' y\rightend,\gamma)$, 
  if $a \in \Sigma_R$ and $q'\in \delta_R(q,a,z)$, 
\item 
  $(u q xay \rightend,\empt) \vdash (uxq' y\rightend,\empt)$, 
  if $a \in \Sigma_R$ and $q'\in \delta_R(q,a,\empt)$, 
\item 
  $(uq xay \rightend,z\gamma) \vdash (uxq' y\rightend, z\gamma)$, 
  if $a \in \Sigma_N$ and $q'\in \delta_N(q,a,z)$, 
\item 
  $(uq xay \rightend,\empt) \vdash (uxq' y\rightend,\empt)$, 
  if $a \in \Sigma_N$ and $q'\in \delta_N(q,a,\empt)$, 
\item 
  $(u q x \rightend,z\gamma) \vdash (q' ux\rightend, z\gamma)$, 
  if $q'\in \delta_N(q,\rightend,z)$, 
\item 
  $(u q x \rightend,\empt) \vdash (q' ux\rightend,\empt)$, 
  if $q'\in \delta_N(q,\rightend,\empt)$. 
\end{enumerate}
In addition, whenever, the transition function yields $\accept$ for the current
configuration, the successor configuration is $\accept$. 

\begin{sloppypar}
The accepted language $L(M)$ can be easily defined.
Sometimes, we will be saying that an $\nrnidpdawtl$ performs \emph{sweeps}, where a sweep 
is a sequence of transitions that starts with the input head at the left end of the
(remaining) input and ends after the next (if any) return move on the endmarker.
Let us give an intuition on how an $\nrnidpdawtl$ works by the following example.
\end{sloppypar}

\begin{example}\label{exa:intro-nonret}
We consider two languages over the alphabet $\{a,b\}$ together with its overlined variant $\{\bar{a},\bar{b}\}$
and its doubly-overlined variant $\{\bar{\bar{a}},\bar{\bar{b}}\}$. In addition, we consider two mappings
$h_1$ and $h_2$ such that
$h_1(a)=\bar{a}$, $h_1(b)=\bar{b}$, $h_2(a)=\bar{\bar{a}}$, and $h_2(b)=\bar{\bar{b}}$.
Then, the languages $L_1$ and $L_2$ are defined as follows.
\begin{eqnarray*}
 L_1 &=& \{\, w v h_2({w^R})\mid w\in\{a,b\}^+, v\in\{\bar{a},\bar{b}\}^+\,\},\\
 L_2 &=& \{\, w h_1({w^R}) v \mid w\in\{a,b\}^+, v\in\{\bar{\bar{a}},\bar{\bar{b}}\}^+\}.
\end{eqnarray*}
We will show that the union $L=L_1 \cup L_2$ is accepted by a 
nondeterministic input-driven pushdown automaton with translucent input letters in the non-returning mode. 
We define an $\nrnidpdawtl$ $M=\langle Q,\Sigma, \Gamma, q_0, \rightend, \bot, \tau, \delta_D, \delta_R, \delta_N\rangle$ 
accepting~$L$ as follows.
\begin{itemize}
\item $Q=\{q_0, q_1, q_2,q_3, q_4, q_5\}$,
\item $\Sigma_D=\{ a,b\}$, $\Sigma_R=\{ \bar{a},\bar{b},\bar{\bar{a}}, \bar{\bar{b}}\}$, $\Sigma_N=\emptyset$,
\item $\Gamma=\{A,B\}$, 
\item 
$\tau(q_0)=\tau(q_2)=\tau(q_3)=\tau(q_4)=\tau(q_5)=\emptyset$,\quad $\tau(q_1)=\{\bar{a}, \bar{b}\}$.
\end{itemize}

In a first phase, $M$ reads input symbols $a$ or $b$ and pushes corresponding symbol $A$ or $B$ onto
the pushdown store. At any time step, $M$ decides nondeterministically whether it remains in this phase 
by remaining in state $q_0$ or
whether it starts to test whether the input belongs to the first set of the union $L_1$ 
by entering state $q_1$ or to the second set $L_2$ by entering state $q_3$.

To test whether the input belongs to $L_1$, $M$ enters state~$q_1$ and all symbols $\bar{a}$ or $\bar{b}$ become translucent.
Then, $M$ starts to match all remaining doubly-overlined input symbols against the pushdown store.
To ensure the correct format of the remaining input, $M$ changes its state whenever the first doubly-overlined 
symbol is read. If $M$ reaches the endmarker and the pushdown store is empty,
the accepting state is entered. This is realized in rules (3)--(7). 

To test whether the input belongs to $L_2$, $M$ enters state~$q_3$ and no symbols are translucent.
Then, $M$ starts to match all following single-overlined input symbols against the pushdown store.
Once all such symbols are read and the pushdown store is empty, $M$ proceeds to read the doubly-overlined
symbols while popping from the (already empty) pushdown store. 
To ensure the correct format of the input, $M$ changes its state whenever the first overlined and the 
first doubly-overlined symbol is read.
If $M$ encounters the endmarker and the pushdown store is empty, $M$ enters the accepting state.
This is realized in rules (8)--(16). 

The transition functions are defined as follows for $Z \in \{ \bot, A, B\}$.
\begin{center}
\renewcommand{\arraystretch}{1.1}\tabcolsep2pt
\begin{tabular}[t]{rccl}
(1) &  $\delta_D(q_0,a,Z)$ &=& $\{(q_0,A),(q_1,A),(q_3,A)\}$,\\
(2) &  $\delta_D(q_0,b,Z)$ &=& $\{(q_0,B),(q_1,B),(q_3,B)\}$,\\
(3) &  $\delta_R(q_1,\bar{\bar{a}},A)$ &=& $\{q_2\}$,\\
(4) &  $\delta_R(q_1,\bar{\bar{b}},B)$ &=& $\{q_2\}$,\\
(5) &  $\delta_R(q_2,\bar{\bar{a}},A)$ &=& $\{q_2\}$,\\
(6) &  $\delta_R(q_2,\bar{\bar{b}},B)$ &=& $\{q_2\}$,\\
(7) &  $\delta_N(q_2,\rightend,\bot)$ &=& $\accept$,\\
(8) &  $\delta_R(q_3,\bar{a},A)$ &=& $\{q_4\}$,\\
\end{tabular}
\qquad\qquad
\begin{tabular}[t]{rccl}
(9) &  $\delta_R(q_3,\bar{b},B)$ &=& $\{q_4\}$,\\
(10) &  $\delta_R(q_4,\bar{a},A)$ &=& $\{q_4\}$,\\
(11) &  $\delta_R(q_4,\bar{b},B)$ &=& $\{q_4\}$,\\
(12) &  $\delta_R(q_4,\bar{\bar{a}},\bot)$ &=& $\{q_5\}$,\\
(13) &  $\delta_R(q_4,\bar{\bar{b}},\bot)$ &=& $\{q_5\}$,\\
(14) &  $\delta_R(q_5,\bar{\bar{a}},\bot)$ &=& $\{q_5\}$,\\
(15) &  $\delta_R(q_5,\bar{\bar{b}},\bot)$ &=& $\{q_5\}$,\\
(16) &  $\delta_N(q_5,\rightend,\bot)$ &=& $\accept$.\\
\end{tabular}
\end{center}

Clearly, if there is any error in the format of the input or the part of input compared to the pushdown store does
not match, the transition function of~$M$ is not defined and, therefore, the input is rejected.

We would like to note that~$M$ uses its translucency to ``overlook'' the overlined part of an input belonging to~$L_1$,
since otherwise the pushdown store may be inadvertently emptied and could not be matched with the doubly-overlined input~$w^R$.
Moreover, $M$ uses the non-returning mode to ensure that the input is correctly formatted, that is, symbols from $\{a,b\}^+$
are followed by symbols from $\{\bar{a},\bar{b}\}^+$ which are in turn followed by symbols from $\{\bar{\bar{a}},\bar{\bar{b}}\}^+$.
\eoe
\end{example}

As is known for finite automata and regular pushdown
automata~\cite{kutrib:2024:dpdawtl:proc},
also for input-driven pushdown automata with translucent letters it holds that
any automaton $M$ working in the returning mode can be simulated by some
automaton $M'$ working in the non-returning mode, where both share the same signature and the
same transclucency mapping (for the states of $M$).
The construction of $M'$ for a given~$M$, roughly speaking, works as follows:
at each step, $M'$ simulates the step of $M$, followed by a new step which brings the input head to the leftmost 
input symbol. To this end, let $Q'$ be a primed copy of $Q$.
The transition function~$\delta_i$, for $i\in \{D,R,N\}$, is modified
to~$\delta'_i$ so that the state $q'\in Q'$ is entered if and only if $M$ enters the state $q\in Q$.
The translucency mapping $\tau$ is extended to $\tau'$ by adding
$\tau'(q')=\Sigma$, for all $q'\in Q'$. This clearly implies that, whenever in any state
from $Q'$, $M'$ sees the endmarker.

Finally, $\delta'_N$ is extended by $\delta'_N(q',\rightend,z)=\{q\}$, for all $q'\in Q'$ and all $z\in
\Gamma\cup\{\bot\}$, thus bringing the input head to the leftmost position. One may easily verify that
$M'$ accepts if and only $M$ accepts.

\section{Determinism versus Nondeterminism}\label{sec:det-vs-ndet}

It is well-known that for input-driven pushdown automata, deterministic
devices are as powerful as their nondeterministic counterparts. This contrasts
with the general case of pushdown automata, where the deterministic variant is
strictly weaker than the nondeterministic one. 
In the following, we examine the situation for deterministic and
nondeterministic input-driven pushdown automata with translucent input
letters. It turns out that the family of languages accepted by deterministic
input-driven pushdown automata with translucent input letters is a proper
subset of the family of languages accepted by their nondeterministic
counterparts, both in the returning and in the non-returning case.

\begin{theorem}\label{theo:det-in-ndet}
 The family $\lfam(\didpdawtl)$ is properly included in the family $\lfam(\nidpdawtl)$
 and the family $\lfam(\nrdidpdawtl)$ is properly included in the family $\lfam(\nrnidpdawtl)$.
\end{theorem}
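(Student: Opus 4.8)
The plan is to establish both proper inclusions by a uniform strategy: inclusion is immediate from the definitions, so the real content lies in separating the deterministic family from the nondeterministic one. Since every $\didpdawtl$ is by definition an $\nidpdawtl$ (the determinism constraint $|\delta_x(q,a,z)|\le 1$ is simply dropped), we have $\lfam(\didpdawtl)\subseteq\lfam(\nidpdawtl)$, and likewise in the non-returning mode. To make each inclusion proper, I would exhibit a single witness language $L$ together with a proof that $L\in\lfam(\nidpdawtl)$ (respectively $L\in\lfam(\nrnidpdawtl)$) but $L\notin\lfam(\didpdawtl)$ (respectively $L\notin\lfam(\nrdidpdawtl)$).

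For the witness I would look to a language whose membership requires an existential guess that cannot be resolved deterministically because the relevant information is spread across the input and is only ``visible'' through a translucency mapping that forces commitment too early. A natural candidate is a language in the spirit of the examples already given, where a nondeterministic device guesses which of several blocks will witness membership and then verifies that one guess on the pushdown store using the input-driven discipline. The construction of the accepting $\nidpdawtl$ would mirror Example~\ref{exa:rep}: initially guess the witnessing block, push a distinguished symbol while scanning it so that its length is recorded on the pushdown, make the competing symbols translucent, and then match against the pushdown in a final phase. Crucially, the input-driven restriction ties each pushdown action to the symbol class ($\Sigma_D,\Sigma_R,\Sigma_N$), so the automaton cannot freely choose to push or pop; the language must therefore be designed so that the required stack behaviour is consistent with a single fixed signature.

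The heart of the argument, and the step I expect to be the main obstacle, is the lower bound: proving that no $\didpdawtl$ can accept $L$. Here I cannot simply invoke a classical pumping or incompressibility argument, because the translucency mapping and the sweep structure give the deterministic device a very nonstandard mode of access to the input, and because the input-driven constraint already limits its pushdown. The plan is to argue by contradiction from the deterministic dynamics: in a fixed state $q$, the set $\tau(q)$ is determined, so the next processed symbol is the first non-translucent letter, and the transition is unique. I would track how a hypothetical deterministic $M$ must behave on a family of inputs that agree on a long prefix but differ in exactly the block that decides membership. Because $M$'s choice of which block to ``commit to'' on the pushdown is forced by its (finite) state and the fixed translucency pattern, I would show that two inputs requiring incompatible commitments drive $M$ into the same configuration after the prefix, so $M$ cannot separate them — contradicting correctness. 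Making this crossing/fooling argument precise in the presence of translucency (carefully accounting for which letters are skipped in each state, and that the pushdown content is pinned down by the input-driven discipline) is the delicate part.

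For the non-returning separation I would either supply a separate witness tailored to the non-returning sweep semantics, in the style of $L=L_1\cup L_2$ from Example~\ref{exa:intro-nonret}, where the nondeterministic device guesses which of two reversal-matching patterns to verify and uses non-returning sweeps plus translucency to enforce the global format, or reuse the returning witness if the same fooling argument transfers. The fooling argument should adapt: in the non-returning mode the head position becomes part of the configuration, but the deterministic evolution is still fully determined by state, remaining input, and pushdown, so I would again construct two inputs that are forced into identical configurations at the decisive moment yet demand opposite answers. The main risk is that the extra positional information could, in principle, let a non-returning deterministic device distinguish inputs that a returning one cannot, so the witness for the non-returning case may need to encode its decisive difference in a way that survives the head's continued rightward motion rather than being washed out by a return to the left end.
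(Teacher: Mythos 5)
Your overall architecture matches the paper's: the inclusions are indeed immediate from the definitions, and the paper's witness is exactly of the kind you describe, namely $L=L_1\cup L_2$ with $L_1=\{\, b^n\border b^m \shuffle a^n\mid m,n\geq 1\,\}$ and $L_2=\{\, b^m\border b^n \shuffle a^n\mid m,n\geq 1\,\}$, accepted by an $\nidpdawtl$ that guesses which $b$-block to match against the $a$'s, in the style of Example~\ref{exa:rep}. However, the deterministic lower bound --- which is the entire mathematical content of the theorem --- is where your proposal has a genuine gap. Your plan is a fooling argument: force two inputs demanding opposite answers into the \emph{same configuration}. For a device with a pushdown store this does not go through as stated: configurations include unbounded stack contents, so you cannot conclude coincidence of configurations from finiteness of states, and the input-driven discipline does not pin the stack down for you, because the signature (which letters lie in $\Sigma_D$, $\Sigma_R$, $\Sigma_N$) is chosen by the hypothetical deterministic machine, not by the language. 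Any correct proof must therefore run a case analysis over all possible signatures, and within each case it must control what the stack can record. The paper does exactly this: it first shows the machine cannot read $\border$ and return before exhausting a $b$-block, then splits into four cases according to the memberships of $a$ and $b$ in $\Sigma_D/\Sigma_R$, and in each case exploits determinism to find \emph{loops} (state loops within a sweep, and loops in the sequence of sweep-starting states across sweeps) which can be pumped to produce an accepted input outside $L$. This pumping of sweep loops, rather than collision of configurations, is the idea your sketch is missing, and without it the ``delicate part'' you flag remains unproved.

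A second, structural point you miss is that only \emph{one} lower bound is needed. Since every returning machine (deterministic or not) is simulated by a non-returning one with the same signature (as described at the end of Section~\ref{sec:prelim}), we have $\lfam(\didpdawtl)\subseteq\lfam(\nrdidpdawtl)$ and $\lfam(\nidpdawtl)\subseteq\lfam(\nrnidpdawtl)$. Hence proving the single statement $L\notin\lfam(\nrdidpdawtl)$ settles both separations at once: it gives $L\notin\lfam(\didpdawtl)$ for the returning case, and the same $L$, being in $\lfam(\nidpdawtl)$, is automatically in $\lfam(\nrnidpdawtl)$ for the non-returning case. Your proposal instead contemplates two separate witnesses and explicitly worries that a returning-mode fooling argument ``may not transfer'' to the non-returning mode; the paper dissolves this worry by directly proving the stronger non-returning lower bound, which is the only one required.
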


\begin{proof}
We use the union $L=L_1\cup L_2$ as witness language for the properness of the
inclusions, where
$
 L_1= \{\, b^n\border b^m \shuffle a^n\mid m,n\geq 1\,\}
$
and
$
 L_2 = \{\, b^m\border b^n \shuffle a^n \mid m,n\geq 1\,\}
$.

Similarly as in Example~\ref{exa:rep}, two $\didpdawtl$'s can be constructed that
accept $L_1$ respectively $L_2$, and an $\nidpdawtl$ can be constructed that
accepts $L=L_1\cup L_2$.

It remains to be shown that $L$ is not accepted by any
$\nrdidpdawtl$. Assume in contrast to the assertion that $L$ is accepted by
some $\nrdidpdawtl$
$M=\langle Q,\Sigma, \Gamma, q_0, \rightend, \bot, \tau, \delta_D, \delta_R, \delta_N\rangle$.
We consider accepting computations on inputs of the form 
$a^n b^k \border b^\ell$ where $n,k,\ell$ are large enough. 

A basic observation is that $M$ cannot access the $b$-block following the
$\border$ unless the $\border$ is read or there are no more $b$'s in front of
the $\border$.

First we claim that $M$ cannot read the symbol $\border$ and
return to the left of the input before one or both $b$-blocks are
read entirely. 

Assume in contrast to the claim that $M$ is in some configuration
$
(a^{m_1} q_1 b^{k_1}\border b^{\ell_1} \rightend,\gamma_1)
$
such that \mbox{$b\in\tau(q_1)$,} and from the successor configuration
$
(a^{m_1} b^{k_1} q_2 b^{\ell_1} \rightend,\gamma_2)
$
it reads some further $b$'s and then jumps to the endmarker reaching a configuration
$
(q_3 a^{m_1} b^{k_1} b^{\ell_2} \rightend, \gamma_3).
$
If neither $b^{k_1}$ nor $b^{\ell_2}$ is empty then
also the inputs
$a^n b^{k+1} \border b^{\ell-1}$ and $a^n b^{k-1} \border b^{\ell+1}$
would be accepted but at least one of them does not belong to $L$. 
The contradiction shows the claim.

Next, we have to consider four cases.

\textbf{Case 1:} The first case is that
neither $a\in \Sigma_D$ and $b\in\Sigma_R$ nor
$a\in \Sigma_R$ and $b\in\Sigma_D$.
Essentially, this means that $M$ does not use its pushdown.

A single sweep of $M$ is analyzed. If~$M$ reads more than $2|Q|$ consecutive 
symbols $a$ (respectively~$b$), it
must eventually enter a loop of, say $c$, states. Since the pushdown is not
used, we can increase the length of the $a$-block (respectively $b$-block) 
by $c$ without changing the overall result of the computation, a contradiction. 
Hence, in this case, in any sweep $M$ cannot enter a loop while reading $a$'s or
$b$'s. Therefore, $M$ must perform multiple sweeps without loops. For the
states in which the sweeps start there are at most $|Q|$ possibilities. Due to
the deterministic behavior, eventually $M$ will run through loops of sweeps with respect
to the starting state of the loop.
Let $x_0$ be the number of $a$'s read before this sweep loop and let $x_1$ be
the number of $a$'s consumed in one sweep loop. Similarly, let $y_0$ and
$y_1$ be the numbers of $b$'s consumed.
Now we consider the input 
$a^{x_0+x_1} b^{y_0+y_1} \border b^\ell$
where $\ell > 2|Q|$. 
Then, after the first sweep loop, the $a$-block and the first $b$-block 
are entirely read. If $x_0+x_1\neq y_0+y_1$,
$M$ must verify whether $x_0+x_1 = \ell$  
To do so, $M$ needs to read the second block of $b$'s 
-- but without using the pushdown store and without any leftover
$a$'s. Given that $\ell > 2|Q|$, $M$ must enter a state loop while reading the
second $b$-block, say of length $c'$. Hence, if $M$ accepts the input for
some $\ell$, it accepts the input with $\ell + c'$, which contradicts the
definition of $L$. 
If $x_0+x_1= y_0+y_1$, almost the same argument holds for the input 
$a^{x_0+x_1} b^{y_0+y_1-1} \border b^\ell$.

\textbf{Case 2:}
The second case is that $a\in \Sigma_D$ and $b\in \Sigma_R$. Moreover, $M$ does not enter
loops with respect to the current state and the topmost pushdown symbol.

In this case, in one sweep, $M$ reads at most $|Q|(|\Gamma|+1)$ symbols of 
the $a$-block and the first $b$-block (as long as these blocks are non-empty).
We argue similarly as in Case 1. Due to
the deterministic behavior, eventually $M$ will run through loops of sweeps with respect
to the starting state of the loop.
Let $x_0$ be the number of $a$'s read before this sweep loop and let $x_1$ be
the number of $a$'s consumed in one sweep loop. Similarly, let $y_0$ and
$y_1$ be the numbers of $b$'s consumed.

If $x_0+x_1> y_0+y_1$, we consider the input 
$a^{x_0+x_1} b^{x_0+x_1+y_0+y_1} \border b^\ell$
where $\ell > 2|Q|$. 
Then, after the first sweep loop, all $a$'s are read, 
and the remaining input is $b^{x_0+x_1} \border b^\ell$. 
Moreover, there are $x_0+x_1 - y_0-y_1$ symbols left in
the pushdown store. 
If $x_0+x_1> y_0+y_1$, next $M$ will empty its pushdown store 
after reading $x_0+x_1 - y_0-y_1$ many $b$'s. 
Afterward, $M$ can only rely on its finite
set of states. Hence, after reading at most $|Q|$ additional
symbols $b$, $M$ enters a state loop of some length $c''$. 
Consequently, if $M$ accepts the input for some $\ell$ it must also accept the input
with $\ell + c''$, which contradicts the
definition of~$L$.

The same reasoning applies 
for the input $a^{x_0+x_1} b^{y_0+y_1} \border b^\ell$,
where $\ell > 2|Q|$, if $x_0+x_1 < y_0+y_1$, and
for the input $a^{x_0+x_1} b^{y_0+y_1-1} \border b^\ell$,
where $\ell > 2|Q|$, if $x_0+x_1 = y_0+y_1$.

\textbf{Case 3:}
The third case is that $a\in \Sigma_D$ and $b\in \Sigma_R$ and
$M$ enters a loop with respect to the current state and the topmost pushdown
symbol while reading an $a$-block or a $b$-block.

First, consider the $a$-block. 
After reading the $a$-block, $M$ has $n$ symbols stored in the pushdown. 
In a sweep where the $\border$ symbol is read, $M$ must either read the 
first $b$-block or the second $b$-block entirely. 
Assume it reads the first $b$-block and consider the input 
$a^n b^k \border b^\ell$ with $k > n$. Then, after reading this block, the
pushdown store is empty, and the rest of the input must be processed using
only the finite control. As a result, the length of the second $b$-block can
be increased which is again a contradiction. The same 
argument applies analogously if the second $b$-block is read first.

Second, consider one of the $b$-blocks. Since $b\in\Sigma_R$,
$M$ runs through a state loop with empty pushdown while processing the block.
So the length of the block can be increased by the length of the loop without
changing the overall result of the computation, a contradiction.
 
\textbf{Case 4:}
The fourth case is that $a\in \Sigma_R$ and $b\in \Sigma_D$ and
$M$ enters a loop with respect to the current state and the topmost pushdown
symbol while reading an $a$-block or a $b$-block.
The argumentation in this case is symmetric to Case 3. Here
the roles of $a$ and $b$ (that is, their memberships in $\Sigma_D$ and
$\Sigma_R$) are switched. 

Finally, from the contradictions in all possible cases we conclude that $L$ is
not accepted by any $\nrdidpdawtl$.
\end{proof}

The additional power of nondeterministic input-driven pushdown automata with
translucent input letters versus their deterministic counterpart
is due to the fact that the translucency of input letters
allows different computation paths to treat the same input symbol differently,
thereby enabling different operations on the pushdown store. In a
nondeterministic setting, this flexibility permits branching into multiple
computational paths, each potentially using the pushdown store in a distinct
way. However, a deterministic machine cannot simulate these differing
operations simultaneously, which limits its expressive power.

\section{Returning versus Non-Returning}\label{sect:ret}

It is known that deterministic pushdown automata with translucent letters
working in the non-returning mode can accept even a non-semilinear 
language~\cite{kutrib:2024:dpdawtl:proc}. So a natural question is whether
the same still holds for the structurally weaker device which has
to obey the input-driven mode. The next theorem answers this question in the
affirmative. To this end, we tweak the witness language from~\cite{kutrib:2024:dpdawtl:proc}.
We define the non-semilinear language $L_{\textsl{nsl}}$ as
$$
L_{\textsl{nsl}} = \{\, a\,\dollar\border\, a^3\,\dollar\border^2\,
a^5\,\dollar\border^3 \cdots\, \dollar\border^{k-1}\, a^{2k-1}\, \dollar\cent^{k}\, a^{2k+1}\mid
k\geq 0\,\}.
$$

\begin{theorem}\label{theo:nrdidpdawtl-not-semilinear}
The language $L_{\textsl{nsl}}$ is accepted by some $\nrdidpdawtl$.
\end{theorem}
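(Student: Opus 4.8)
The plan is to exploit the non-returning mode together with translucency to implement a sweep-based \emph{deletion schedule}, since a single input-driven pass yields only a visibly pushdown (hence, by Parikh's theorem, semilinear) language and therefore cannot by itself recognize the non-semilinear $L_{\textsl{nsl}}$. First I would fix the signature $\Sigma_D=\{a\}$, $\Sigma_R=\{\border,\cent\}$, $\Sigma_N=\{\dollar\}$, and reduce membership to two local arithmetic-progression conditions. Writing a candidate word as $B_0 S_1 B_1 S_2\cdots S_k B_k$ with $B_j\in a^*$, $S_j\in\dollar\border^*$ for $j<k$, and $S_k\in\dollar\cent^*$, I must certify $|B_j|=2j+1$ and $|S_j|_{\border}=j$ (respectively $|S_k|_{\cent}=k$) for all $j$, together with the regular envelope that the three symbol classes occur in the prescribed order with the unique $\cent$-separator in the last position. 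The envelope and the location of that $\cent$-separator are tracked by the finite control in parallel to everything else.

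Next I would realize the two progressions by a staggered consumption of symbols across rounds, where one round corresponds to the traversal between two consecutive returns on the endmarker. In each round the automaton, guided by the visible separator symbols and using translucency to look past the blocks it is not currently touching, deletes exactly one $a$ from the front of every still non-empty $a$-block and one separator symbol from every still non-empty separator; the non-returning return on $\rightend$ then starts the next round at the left end of the shrunken input. Under this schedule $S_j$ becomes empty precisely in round $j$, while $B_j$ becomes empty precisely in round $2j+1$. Hence the intended words are exactly those in which the separators are exhausted left to right, one per round with neither a tie nor a gap, and the $a$-blocks are exhausted left to right with a gap of exactly two rounds between consecutive blocks, the first vanishing already in round one. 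The patterns ``one separator per round'' and ``a gap of two rounds'' are eventually periodic and can be followed locally by the finite control; the point where finite control does not suffice is the comparison of the round in which $B_j$ empties with the round in which $S_j$ empties, since these grow with $j$. This is where I would employ the input-driven pushdown store as a unary counter that carries the unbounded offset between the two staircases, pushing on each consumed $a$ and popping on each consumed separator symbol so that the stack content locks the two differently paced calendars together for every $j$.

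The automaton accepts when, after the final round, every block and every separator has been consumed with all timing constraints met and the control reaches an accepting configuration on $\rightend$; any mismatch---a tie, a gap of the wrong size, a block that survives too long or too short, or a violation of the envelope---leaves the relevant transition undefined and rejects. The hard part will be the determinism and soundness of this schedule: I must exhibit transition functions $\delta_D,\delta_R,\delta_N$ and a translucency mapping $\tau$ under which, in every round, the head deterministically reaches and deletes the correct front symbol of each active segment while skipping the rest, and I must prove that the deletion calendar characterises $L_{\textsl{nsl}}$ \emph{exactly}. Accepting every member is the easy direction; the delicate direction is rejecting every non-member, including off-by-one perturbations of a single block or separator and inputs with the right Parikh image but the wrong order of classes. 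The principal technical obstacle is that the input shrinks between rounds, so I must argue that ``the $j$-th separator'' and ``the $j$-th block'' remain identifiable after earlier segments have vanished, and that a single input-driven stack genuinely suffices to synchronise the two staircases; once the invariant tying the stack contents to the pair of round counters is pinned down, the correctness proof becomes a round-by-round induction on $j$.
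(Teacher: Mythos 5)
Your overall strategy---process the input in sweeps, use translucency to touch only the front of each segment, and let a deletion calendar plus the pushdown certify the two arithmetic progressions---is the right kind of idea, and your signature and your reduction of membership to $|B_j|=2j+1$, $|S_j|_{\border}=j$ are fine. But the specific calendar you propose, namely delete exactly \emph{one} $a$ from every non-empty block and one symbol from every non-empty separator per round, cannot be implemented, and this is not a deferrable technicality. In this model a symbol can never serve as a reusable landmark: every step reads, and thereby deletes, the first visible symbol, so a boundary can be ``seen'' only as many times as the delimiting segment has symbols. Under your equal-rate schedule, separator $S_j$ (at most $j+1$ symbols) is exhausted after about $j$ rounds, while the blocks $B_{j-1}$ and $B_j$ that it separates survive until rounds $2j-1$ and $2j+1$ (note also the off-by-one: with one symbol deleted per round, $S_j=\dollar\border^j$ empties in round $j+1$, not $j$). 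From roughly round $j+2$ onward, $B_{j-1}$ and $B_j$ are adjacent non-empty runs of $a$'s with nothing visible between them; translucency cannot separate two runs of the same letter, so ``one $a$ from each block'' becomes undefined and your claimed timing collapses exactly where the non-semilinear growth lives. The arithmetic of $L_{\textsl{nsl}}$, namely $|B_j|=2|S_j|_{\border}+1$, forces a calendar in which $a$'s are consumed at \emph{twice} the rate of separator symbols: for instance, each sweep deletes the first block and the first separator entirely, two $a$'s from every later block, and one symbol from every later separator, which makes segments vanish in left-to-right order, always leaves visible material between surviving segments, and shrinks the remaining input to the instance of $L_{\textsl{nsl}}$ with parameter $k-1$. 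That self-reproducing invariant is what makes sweeps well-defined; your calendar does not have it.

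Second, the pushdown discipline you give (push on every consumed $a$, pop on every consumed separator symbol, ``locking the calendars together'') does not actually check anything. Under any per-round schedule, the numbers of pushes and pops in a round depend only on \emph{how many} segments are still non-empty, not on whether their individual lengths are correct, so an input with (say) one block lengthened by $2$ produces the same stack evolution as the valid word for many rounds; moreover pushes outnumber pops roughly $2{:}1$, so the stack never empties at a checkpoint where a transition could be left undefined. You never state what the stack must contain at the start of round $r$ and which transition fails on a bad input, yet that invariant is the real content of the proof: finite control plus translucency alone is genuinely insufficient, because a too-short separator is silently skipped (the head jumps over the gap and pairs a block with the wrong separator), which lets ill-formed words such as $a\,\dollar\,\border\,a^{3}\,\cent\,a^{2}\notin L_{\textsl{nsl}}$ survive purely state-based sweep checks. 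The pushdown must therefore be used for \emph{exact} remnant comparisons---e.g.\ push one symbol per $a$ of the current first block, pop one per $\border$ of the following separator, and detect over- or undershoot through the $\bot$-versus-nonempty distinction, possibly with a marked bottom-of-layer stack symbol---and the correctness proof must show these tests reject every misaligned input. So the two issues you flag at the end as ``technical obstacles'' are in fact fatal to the construction as proposed: the calendar has to be changed and the stack invariant has to be designed and proved before either direction of the characterization can be carried out.
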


The language $L_{\textsl{nsl}}$ is accepted by some $\nrdidpdawtl$ and thus by
some $\nrnidpdawtl$. It is known that all languages accepted by $\dpdawtl$ are
semilinear~\cite{kutrib:2024:dpdawtl:proc}. So, the next question is whether 
this is still true when we trade nondeterminism for the input-driven property.
The following result has been shown in~\cite{nagy:2011:fsawtl} for finite 
automata with translucent letters and can be adapted for our purposes.

\begin{proposition}\label{prop:letter-equivalent}
From any given $\nidpdawtl$ $M$, an $\npda$ $M'$ can effectively be constructed, such that
$L(M')\subseteq L(M)$ and $L(M')$ is letter-equivalent to $L(M)$.
\end{proposition}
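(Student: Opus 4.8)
The plan is to adapt the Nagy--Otto construction~\cite{nagy:2011:fsawtl} to the pushdown setting: I would build an $\npda$ $M'$ that no longer skips translucent letters but instead reads, from left to right, exactly the letters that $M$ deletes, in the order in which $M$ deletes them. The point is that in the returning mode $M$ always processes the first non-translucent letter of the current remaining word and then restarts at the left end; hence if $a_{i_1},a_{i_2},\dots,a_{i_t}$ is the sequence of letters deleted in a fixed accepting computation of $M$ on some input $w$, then on the word $u=a_{i_1}a_{i_2}\cdots a_{i_t}$ the machine $M$ reproduces the very same computation step by step. Indeed, just before $a_{i_j}$ is processed the remaining word is $a_{i_j}\cdots a_{i_t}$, whose first symbol $a_{i_j}$ is non-translucent for the current state, so it is the one scanned. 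Since $M$ is input-driven, the stack actions ($D$ push, $R$ pop, $N$ nothing) are dictated by the letters and can be carried out by an ordinary (non-input-driven) $\npda$; thus $M'$ simply mirrors each transition of $M$, reading $a_{i_j}$ from its own input.

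First I would fix the straightforward part of the simulation. The automaton $M'$ keeps $M$'s stack alphabet and simulates a symbol step via $\delta_D$, $\delta_R$, or $\delta_N$ by reading a letter $c$ from its input, verifying $c\notin\tau(q)$ for the current state $q$, and applying the corresponding push/pop/no-op; it simulates an endmarker step $\delta_N(q,\rightend,z)$ by a $\lambda$-transition; and it moves to acceptance exactly when $M$ does. The delicate point, and the main obstacle, is the treatment of the endmarker steps: in $M$ such a step in state $q$ is enabled only because the whole remaining word is translucent for $q$, yet when $M'$ takes the corresponding $\lambda$-transition it has not yet seen the letters still to come. I would resolve this by deferred verification: $M'$ maintains in its finite control the monotonically growing set $S\subseteq Q$ of states in which an endmarker step has already been simulated, and it requires every letter read afterwards to lie in $\bigcap_{q\in S}\tau(q)$. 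Because the letters read after a simulated endmarker step in state $q$ are precisely the letters that were still present when $M$ reached the endmarker in $q$, this condition reproduces exactly the enabledness condition of each endmarker step.

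Since acceptance of $M$ may fire before all of $w$ is consumed, the word $u$ of deleted letters is in general only a scattered subword of $w$, so to obtain letter-equivalence I would let $M'$, after simulating the accepting transition, switch to a \emph{tail} mode in which it reads the remaining never-deleted letters $r$, again checking each against $\bigcap_{q\in S}\tau(q)$, and accepts at the end of its input. With this, $L(M')\subseteq L(M)$ is proved by replaying an accepting run of $M'$ on $u\cdot r$ as a genuine computation of $M$: every symbol step is legal because the scanned letter is the first symbol of the remaining word and is non-translucent, and every endmarker step is legal because the $S$-conditions guarantee that the remaining word is translucent for the corresponding state, so the accepting transition fires and $u\cdot r\in L(M)$. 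For the reverse Parikh inclusion, given $w\in L(M)$ I would take an accepting computation, let $u$ list its deleted letters in order and let $r$ collect the never-deleted ones; the $S$-conditions are automatically met by a genuine leftover, so $u\cdot r\in L(M')$, and $\Psi(u\cdot r)=\Psi(w)$ because $u\cdot r$ is a permutation of $w$. Hence $\Psi(L(M'))=\Psi(L(M))$ together with $L(M')\subseteq L(M)$, and the whole construction is clearly effective, which establishes the claim.
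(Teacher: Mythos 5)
Your proposal is correct and takes essentially the same route as the paper, which proves this proposition only by remarking that the Nagy--Otto construction for finite automata with translucent letters~\cite{nagy:2011:fsawtl} can be adapted: your deletion-order simulation with a translucency-checked tail is exactly that adaptation, and your growing set $S$ is precisely the extra bookkeeping this model needs because, unlike the finite-automaton case, it admits intermediate state-changing transitions on the endmarker. One detail should be made explicit: when the accepting move is an endmarker transition $\delta_N(q,\rightend,z)=\accept$, the state $q$ must itself be added to $S$ before the tail phase (equivalently, the tail letters must also be checked to lie in $\tau(q)$), since the replayed computation of $M$ on $u\cdot r$ sees the endmarker in state $q$ only if every tail letter is translucent for $q$; without this check $L(M')\subseteq L(M)$ can fail.
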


Proposition~\ref{prop:letter-equivalent} together with the well-known result that 
all context-free languages are semilinear~\cite{Parikh:1966:ocfl} implies that 
all languages in $\lfam(\nidpdawtl)\supset \lfam(\didpdawtl)$ are semilinear.
So, by Theorem~\ref{theo:nrdidpdawtl-not-semilinear} we have the following corollary. 

\begin{corollary}\label{cor:ret-in-nonret}
The family $\lfam(\didpdawtl)$ is properly included in 
$\lfam(\nrdidpdawtl)$, and the family $\lfam(\nidpdawtl)$ is properly included in 
$\lfam(\nrnidpdawtl)$.
\end{corollary}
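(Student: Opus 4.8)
The plan is to treat the two inclusions and their properness separately, since the substantial work has already been done. First I would obtain both inclusions $\lfam(\didpdawtl)\subseteq\lfam(\nrdidpdawtl)$ and $\lfam(\nidpdawtl)\subseteq\lfam(\nrnidpdawtl)$ directly from the general returning-to-non-returning simulation described just before Section~\ref{sec:det-vs-ndet}: from any automaton $M$ working in the returning mode that construction produces an automaton $M'$ in the non-returning mode with $L(M')=L(M)$, and it plainly preserves the deterministic as well as the nondeterministic character, since the primed state copy $Q'$, the extension $\tau'(q')=\Sigma$, and the added endmarker transitions $\delta'_N(q',\rightend,z)=\{q\}$ introduce no new nondeterministic choices. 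Hence each returning family embeds into its non-returning counterpart.

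For properness the key point is that the returning families contain only semilinear languages. I would fix an arbitrary $\nidpdawtl$ $M$ and apply Proposition~\ref{prop:letter-equivalent} to obtain an $\npda$ $M'$ with $L(M')\subseteq L(M)$ and $\Psi(L(M'))=\Psi(L(M))$. As $L(M')$ is context-free, its Parikh image $\Psi(L(M'))$ is semilinear by Parikh's theorem, and letter-equivalence transfers semilinearity to $\Psi(L(M))$. Therefore every language in $\lfam(\nidpdawtl)$ is semilinear, and via the inclusion $\lfam(\didpdawtl)\subseteq\lfam(\nidpdawtl)$ supplied by Theorem~\ref{theo:det-in-ndet} the same holds for every language in $\lfam(\didpdawtl)$.

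It then remains to exhibit a separating language, and $L_{\textsl{nsl}}$ serves both inclusions at once. By Theorem~\ref{theo:nrdidpdawtl-not-semilinear} we have $L_{\textsl{nsl}}\in\lfam(\nrdidpdawtl)\subseteq\lfam(\nrnidpdawtl)$, where the last inclusion holds because every deterministic device is a special nondeterministic one. On the other hand $L_{\textsl{nsl}}$ is non-semilinear: in the word indexed by $k$ the number of occurrences of $a$ equals $(k+1)^2$ while the number of occurrences of $\dollar$ equals $k$, a quadratic dependence that no finite union of linear sets can reproduce. By the previous paragraph $L_{\textsl{nsl}}$ thus lies in neither $\lfam(\didpdawtl)$ nor $\lfam(\nidpdawtl)$, so both inclusions are proper. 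I do not expect a genuine obstacle here, as the corollary is essentially an assembly of Theorem~\ref{theo:nrdidpdawtl-not-semilinear} and Proposition~\ref{prop:letter-equivalent}; the only point meriting care is the bookkeeping observation that the single witness $L_{\textsl{nsl}}$ settles both separations simultaneously.
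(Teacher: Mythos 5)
Your proposal is correct and follows essentially the same route as the paper: the inclusions come from the returning-to-non-returning simulation sketched at the end of Section~\ref{sec:prelim}, and properness follows because Proposition~\ref{prop:letter-equivalent} together with Parikh's theorem forces every language in $\lfam(\nidpdawtl)\supseteq\lfam(\didpdawtl)$ to be semilinear, while the non-semilinear witness $L_{\textsl{nsl}}$ lies in $\lfam(\nrdidpdawtl)\subseteq\lfam(\nrnidpdawtl)$ by Theorem~\ref{theo:nrdidpdawtl-not-semilinear}. Your explicit count $|w|_a=(k+1)^2$ versus $|w|_{\dollar}=k$ is a detail the paper leaves implicit, but otherwise the arguments coincide.
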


The remaining two language families under consideration to be compared are
$\lfam(\nrdidpdawtl)$ and $\lfam(\nidpdawtl)$.

\begin{proposition}
The language families $\lfam(\nrdidpdawtl)$ and $\lfam(\nidpdawtl)$
are incomparable.
\end{proposition}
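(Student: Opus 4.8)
The plan is to establish incomparability by exhibiting two separating witnesses, one for each required non-inclusion. Both witnesses are already available from earlier results, so the proposal amounts to assembling them correctly rather than proving fresh lower bounds.

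First I would handle the direction $\lfam(\nrdidpdawtl) \not\subseteq \lfam(\nidpdawtl)$. Here the natural witness is the non-semilinear language $L_{\textsl{nsl}}$. By Theorem~\ref{theo:nrdidpdawtl-not-semilinear} it lies in $\lfam(\nrdidpdawtl)$, so it suffices to argue that it escapes $\lfam(\nidpdawtl)$. For this I would invoke Proposition~\ref{prop:letter-equivalent}, which associates to every $\nidpdawtl$ an $\npda$ accepting a letter-equivalent subset; combined with the semilinearity of context-free languages~\cite{Parikh:1966:ocfl}, this shows every language in $\lfam(\nidpdawtl)$ is semilinear. Since $L_{\textsl{nsl}}$ is non-semilinear by construction, it cannot be accepted by any $\nidpdawtl$, giving the first non-inclusion.

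For the reverse direction $\lfam(\nidpdawtl) \not\subseteq \lfam(\nrdidpdawtl)$, I would reuse the witness $L=L_1\cup L_2$ from the proof of Theorem~\ref{theo:det-in-ndet}, with $L_1=\{\, b^n\border b^m \shuffle a^n\mid m,n\geq 1\,\}$ and $L_2=\{\, b^m\border b^n \shuffle a^n\mid m,n\geq 1\,\}$. That proof already supplies both facts I need: an $\nidpdawtl$ accepting $L$ (so $L\in\lfam(\nidpdawtl)$) and the detailed four-case argument showing that no $\nrdidpdawtl$ accepts $L$ (so $L\notin\lfam(\nrdidpdawtl)$). Consequently no additional work is required here beyond citing Theorem~\ref{theo:det-in-ndet}.

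The two non-inclusions together yield incomparability. I expect no genuine obstacle, since both halves are consequences of results proved earlier in the paper; the only care needed is to match the quantifiers of the two models correctly, namely that the semilinearity bound applies to the \emph{returning nondeterministic} family while the non-semilinear separator lives in the \emph{non-returning deterministic} family, and vice versa for the second witness. Were one forced to prove these separations from scratch, the hard part would be the lower bound $L\notin\lfam(\nrdidpdawtl)$, which requires the pumping-style case analysis over the possible signature assignments of $a$ and $b$ carried out in Theorem~\ref{theo:det-in-ndet}.
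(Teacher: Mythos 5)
Your proposal is correct and matches the paper's own proof exactly: both directions use the same two witnesses, namely $L_{\textsl{nsl}}$ (in $\lfam(\nrdidpdawtl)$ by Theorem~\ref{theo:nrdidpdawtl-not-semilinear}, but non-semilinear and hence outside $\lfam(\nidpdawtl)$ via Proposition~\ref{prop:letter-equivalent} and Parikh's theorem) and $L_1\cup L_2$ from Theorem~\ref{theo:det-in-ndet}. The only difference is presentational: you spell out the semilinearity argument that the paper leaves implicit in the corollary preceding the proposition.
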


\begin{proof}
Theorem~\ref{theo:nrdidpdawtl-not-semilinear} provides a non-semilinear language accepted by
some $\nrdidpdawtl$ but not accepted by any $\nidpdawtl$. 

Conversely,
the proof of Theorem~\ref{theo:det-in-ndet} provides a
language that is accepted by some $\nidpdawtl$ but cannot be accepted by any
$\nrdidpdawtl$.
\end{proof}

\section{Closure under Boolean Operations}\label{sec:closure}

Often nondeterministic devices induce language families that are closed under
union but are not closed under intersection. This implies immediately the
non-closure under complementation. However, here the closure under union is
an open problem. Therefore, we show the non-closure under complementation directly.
A witness for the nondeterministic families $\lfam(\nrnidpdawtl)$ and
$\lfam(\nidpdawtl)$ is the language
$$
L_{\textsl{rep}}= \{\, b^n (\border b^n)^k \shuffle a^n \mid n\geq 1, k\geq 0\,\}.
$$

\begin{theorem}\label{theo:nondet-not-closed-complement}
The language families $\lfam(\nrnidpdawtl)$ and $\lfam(\nidpdawtl)$ are not
closed under complementation.
\end{theorem}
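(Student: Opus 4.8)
The plan is to show that the complement $\compl{L}_{\textsl{rep}}$ lies in both families while $L_{\textsl{rep}}$ itself does not, so neither family can be closed under complementation. The first half is already essentially done for us: Example~\ref{exa:rep} exhibits an $\nidpdawtl$ accepting $\compl{L}_{\textsl{rep}}$, and by the simulation described at the end of Section~\ref{sec:prelim} (returning mode can be simulated in non-returning mode with the same signature and translucency mapping) the same language lies in $\lfam(\nrnidpdawtl)$ as well. So if either family were closed under complementation, $L_{\textsl{rep}}$ would belong to it too, and the whole argument reduces to proving that $L_{\textsl{rep}}\notin\lfam(\nidpdawtl)$ and $L_{\textsl{rep}}\notin\lfam(\nrnidpdawtl)$.

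The core obstacle is therefore a lower-bound (non-acceptance) argument for $L_{\textsl{rep}}$. The key structural fact to exploit is that $L_{\textsl{rep}}$ requires checking \emph{global} equality: every one of the $k+1$ blocks $b^n$ must have exactly the same length $n$, and this common length must equal the number of $a$'s. An input-driven device has only one pushdown whose behaviour on each letter is fixed by the signature, so it can essentially ``count'' $n$ only once (pushing on $b$'s or on $a$'s). I would argue that whatever the signature, the automaton cannot simultaneously verify all these equalities for unboundedly many blocks. Concretely, I would fix a candidate $\nidpdawtl$ (or $\nrnidpdawtl$) $M$ and feed it inputs of the form $a^n\,b^{n_1}\border b^{n_2}\border\cdots\border b^{n_k}$ with $k$ and $n$ large relative to $|Q|$ and $|\Gamma|$. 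Proceeding by a case analysis on the signature of $\{a,b\}$ as in the proof of Theorem~\ref{theo:det-in-ndet} --- the essentially different cases being whether the pushdown is used at all, and whether $a$ or $b$ is the pushing letter --- one shows that in an accepting computation the pushdown height is bounded in terms of a single block, so that some block length can be perturbed (by the length of a state-loop or stack-loop) without affecting acceptance, contradicting the requirement that all block lengths coincide with the $a$-count.

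The main subtlety, and the part I expect to be hardest, is handling nondeterminism together with translucency: unlike in Theorem~\ref{theo:det-in-ndet}, here there is no deterministic ``sweep loop'' to latch onto, so I cannot simply track a unique computation. Instead I would apply a pumping-style argument to a \emph{fixed} accepting computation. Because the actions on the pushdown are dictated by the signature (and are the same on all copies of a letter), a single pushdown profile governs the whole accepting run; once the pushdown is emptied, the device is effectively a finite-state machine on the remaining input, and any sufficiently long run of equal letters processed with empty pushdown forces a state repetition that can be pumped. The technical care lies in verifying that the pumped block is genuinely one of the $b$-blocks (or the $a$-block) whose length must be matched, so that the pumped word leaves $L_{\textsl{rep}}$; this is where the specific interleaving permitted by the shuffle with $a^n$, and the placement of the $\border$-separators, must be used to pin down which block the loop falls into. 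Once the non-membership of $L_{\textsl{rep}}$ is established for both the returning and non-returning variants, the theorem follows immediately from the closure assumption and the membership of $\compl{L}_{\textsl{rep}}$.
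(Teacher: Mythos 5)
Your reduction is the right one and matches the paper's setup: $\compl{L}_{\textsl{rep}}\in\lfam(\nidpdawtl)$ by Example~\ref{exa:rep}, hence also $\compl{L}_{\textsl{rep}}\in\lfam(\nrnidpdawtl)$ by the returning-to-non-returning simulation, so everything hinges on showing $L_{\textsl{rep}}\notin\lfam(\nidpdawtl)$ and $L_{\textsl{rep}}\notin\lfam(\nrnidpdawtl)$. However, for the returning family you overlook an argument that the paper hands you and that replaces your entire case analysis: by Proposition~\ref{prop:letter-equivalent} together with Parikh's theorem, every language in $\lfam(\nidpdawtl)$ is semilinear (the paper states this explicitly before Corollary~\ref{cor:ret-in-nonret}), and $L_{\textsl{rep}}$ is \emph{not} semilinear, since $\Psi(L_{\textsl{rep}})=\{(n,(k+1)n,k)\mid n\geq 1, k\geq 0\}$ Presburger-defines multiplication of positive integers. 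That settles the $\lfam(\nidpdawtl)$ half in two lines.

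The genuine gap is the non-returning half, which is exactly where your sketch stops being a proof. Semilinearity is unavailable there (Theorem~\ref{theo:nrdidpdawtl-not-semilinear} shows $\lfam(\nrdidpdawtl)$, hence $\lfam(\nrnidpdawtl)$, contains non-semilinear languages), so a direct argument is indeed required, but your sketch rests on two claims that fail in an essential case. If $b\in\Sigma_D$, the pushdown grows on \emph{every} $b$ of \emph{every} block, so the assertion that ``the pushdown height is bounded in terms of a single block'' is false, and the pushdown need never be emptied, so the fallback ``once the pushdown is emptied, the device is effectively a finite-state machine on the remaining input'' never applies; yet this is precisely the signature an adversarial automaton would choose, since the stack can then record the whole sequence of blocks (with boundary information encoded in the pushed symbols) and the $a$'s pop into it. Moreover, the pumping step you defer as ``technical care'' is the actual crux: in a nondeterministic, translucent, non-returning run, a repeated state with equal top-of-stack symbol is not by itself a pumpable loop, because the head position, the multiset of already-erased letters, and the stack contents below the top all enter the configuration; one must argue that the inserted letters can be placed so that the remainder of the accepting run survives verbatim and that they fall inside a single $b$-block (or the $a$-portion) so that the pumped word exits $L_{\textsl{rep}}$. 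None of this is carried out, and unlike Theorem~\ref{theo:det-in-ndet} there is no determinism to force the sweep-loop structure you implicitly lean on, so as written the proposal does not establish $L_{\textsl{rep}}\notin\lfam(\nrnidpdawtl)$.
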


It is well known that the families of languages induced by deterministic
pushdown automata and real-time deterministic pushdown automata are closed
under complementation. The closure has also been derived for
$\dpdawtl$ and $\nrdpdawtl$~\cite{kutrib:2024:dpdawtl:proc}. Here we
complement these results by showing the closure for the deterministic language families
studied here.

\begin{proposition}\label{prop:det-closed-complement}
The language families $\lfam(\nrdidpdawtl)$ and $\lfam(\didpdawtl)$ are
closed under complementation.
\end{proposition}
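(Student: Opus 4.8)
The plan is to show that any given deterministic device can be \emph{completed}, so that on every input its (unique) computation halts with an explicit accept-or-reject verdict, and then to swap the two verdicts. Throughout, the construction preserves the signature $\Sigma_D,\Sigma_R,\Sigma_N$ and the translucency mapping on the original states, so the resulting automaton is again a $\didpdawtl$ (respectively $\nrdidpdawtl$) accepting $\compl{L(M)}$. Since the argument is almost identical for both modes, I describe it for the returning case and indicate the one place where the non-returning case needs a separate check.

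First I would normalize acceptance to the endmarker. I introduce two fresh states $q_{\text{acc}}$ and $q_{\text{rej}}$ with $\tau(q_{\text{acc}})=\tau(q_{\text{rej}})=\emptyset$, so that both see every input letter. Every occurrence of $\accept$ in $\delta_D,\delta_R,\delta_N$ on an input letter is replaced by a transition into $q_{\text{acc}}$, and every undefined input-letter transition is redirected into $q_{\text{rej}}$. In both sinks the machine simply consumes the remaining input, looping in the respective sink state; crucially, because the model is input-driven, the pushdown action attached to each such transition is dictated by the letter read -- push a fixed dummy symbol for letters in $\Sigma_D$, pop for letters in $\Sigma_R$ (legal even on $\bot$, which is never removed), and leave the pushdown unchanged for letters in $\Sigma_N$. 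Hence the sinks respect the original signature. On the endmarker, $q_{\text{acc}}$ takes the unique accepting transition and $q_{\text{rej}}$ halts without accepting. After this step the original behaviour is preserved, since an input that was accepted mid-scan is still accepted once the remaining letters are consumed in $q_{\text{acc}}$.

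The main obstacle is nontermination. Every processing step on a visible input letter deletes that letter, so the input strictly shrinks and only finitely many such steps can occur; moreover, a step on the endmarker leaves the pushdown unchanged (only $\delta_N$ applies there) and consumes no input. Consequently any infinite computation must end in an unbounded run of endmarker steps during which both the remaining (all-translucent) input and the pushdown stay fixed, so the state evolution is governed by a fixed deterministic partial map on the finite set $Q$. I would therefore equip the finite control with a counter bounded by $|Q|+1$ that counts consecutive endmarker steps and is reset whenever a visible letter is processed; once it exceeds $|Q|$ a cycle is certain, and the machine is sent to $q_{\text{rej}}$. Undefined endmarker transitions are likewise redirected to $q_{\text{rej}}$. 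In the non-returning mode the same bound applies: the only sweeps that make no progress are those consuming no visible letter, which are exactly the pure endmarker steps, whereas any sweep consuming at least one letter shortens the input and resets the counter.

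After completion every computation reaches either $q_{\text{acc}}$ (and accepts on the endmarker) or $q_{\text{rej}}$ (and rejects on the endmarker), and these two outcomes are mutually exclusive and exhaustive. The complement automaton is then obtained by swapping the two endmarker verdicts, that is, by letting $q_{\text{rej}}$ accept and $q_{\text{acc}}$ reject on $\rightend$. Since $M$ is deterministic, its computation on each input is unique, so the swapped machine accepts exactly those inputs on which $M$ does not accept; that is, it accepts $\compl{L(M)}$ with the same signature. Carrying out the identical construction in the non-returning setting yields the closure of $\lfam(\nrdidpdawtl)$.
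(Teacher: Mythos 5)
Your overall strategy---complete the deterministic machine so that every computation halts with an explicit verdict at the endmarker, then swap the two verdicts---is sound, and you handle the genuinely delicate point correctly: since every step on a visible letter deletes that letter, and steps on the endmarker change neither the remaining input nor the pushdown (only $\delta_N$ applies there, in both modes), an infinite computation must end in a run of consecutive endmarker steps in which only the state changes; a counter bounded by $|Q|+1$ in the finite control therefore detects every loop, and sending the machine to $q_{\text{rej}}$ at that point is correct because a repeated state in such a run forces, by determinism, a cycle that can never accept. Your sink states also respect the input-driven requirement and preserve the signature and translucency on original states, as needed. (For the record, the paper states this proposition without printing a proof, so there is no line-by-line comparison to make; your completion-plus-swap argument is the standard one for deterministic devices and is surely what the authors intend, in analogy to the cited closure for $\dpdawtl$ and $\nrdpdawtl$.)

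There is, however, one concrete slip that, as literally written, breaks the final swap: you reroute occurrences of $\accept$ only ``on an input letter'' into $q_{\text{acc}}$, yet $\accept$ can also occur as a value $\delta_N(q,\rightend,z)=\accept$ for an \emph{original} state $q$---indeed both examples in the paper accept exactly this way, e.g.\ $\delta_N(q_2,\rightend,\bot)=\accept$ in Example~\ref{exa:intro-nonret}. With such transitions left untouched, your claim that $q_{\text{acc}}$ carries ``the unique accepting transition'' on the endmarker is false, and the automaton obtained by swapping only the $q_{\text{acc}}/q_{\text{rej}}$ verdicts would still accept every word that $M$ accepts via an endmarker accept, so it would not recognize $\compl{L(M)}$. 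The repair is immediate and stays entirely within your construction: also replace every $\delta_N(q,\rightend,z)=\accept$ by $\delta_N(q,\rightend,z)=\{q_{\text{acc}}\}$. Since $\tau(q_{\text{acc}})=\emptyset$, the sink then consumes whatever translucent suffix was left unread and accepts at the endmarker before the swap (so $L$ is preserved), and rejects there after the swap (so these inputs are correctly excluded from the complement). With this one-line fix your proof is complete for both the returning and the non-returning mode.
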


Next, we show that both deterministic families are not closed under the remaining
Boolean operations union and intersection.

\begin{proposition}\label{prop:det-notclosed-union}
The language families $\lfam(\nrdidpdawtl)$ and $\lfam(\didpdawtl)$ are
neither closed under union nor under intersection.
\end{proposition}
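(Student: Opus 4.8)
The plan is to reuse the witness languages from the proof of Theorem~\ref{theo:det-in-ndet} to obtain non-closure under union directly, and then to derive non-closure under intersection essentially for free from the complementation closure established in Proposition~\ref{prop:det-closed-complement}. Write $L_1=\{\,b^n\border b^m \shuffle a^n\mid m,n\geq 1\,\}$, $L_2=\{\,b^m\border b^n \shuffle a^n\mid m,n\geq 1\,\}$, and $L=L_1\cup L_2$. Theorem~\ref{theo:det-in-ndet} already establishes that $L\notin\lfam(\nrdidpdawtl)$, and since $\lfam(\didpdawtl)\subseteq\lfam(\nrdidpdawtl)$ (every returning machine is simulated by a non-returning one, as described after Example~\ref{exa:intro-nonret}), we also get $L\notin\lfam(\didpdawtl)$.

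For the union part it remains to verify that $L_1$ and $L_2$ themselves belong to both families. As noted in the proof of Theorem~\ref{theo:det-in-ndet}, each is accepted by a $\didpdawtl$ built in the spirit of Example~\ref{exa:rep}, and I would make explicit that they can be given the \emph{common} signature $\Sigma_D=\{b\}$, $\Sigma_R=\{a\}$, $\Sigma_N=\{\border\}$: the acceptor for $L_2$ pushes a junk symbol while scanning the first $b$-block and a genuine counter while scanning the second, then pops one counter per $a$; the acceptor for $L_1$ pushes counters for the first $b$-block, then turns $b$ translucent so as to skip the second $b$-block while popping one counter per $a$, and finally checks that the second block is nonempty. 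Hence $L_1,L_2\in\lfam(\didpdawtl)\subseteq\lfam(\nrdidpdawtl)$, while $L=L_1\cup L_2$ lies in neither family; both are therefore not closed under union.

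For intersection I would invoke the identity $L_1\cup L_2=\compl{\compl{L_1}\cap\compl{L_2}}$. If either family were closed under intersection, then by Proposition~\ref{prop:det-closed-complement} it would also be closed under union, contradicting the previous paragraph; thus neither family is closed under intersection. The argument is short once Theorem~\ref{theo:det-in-ndet} is in hand, so I expect the only delicate point to be the setup rather than the deduction: one must confirm that $L_1$ and $L_2$ admit deterministic input-driven acceptors with a single compatible signature (which is why $L_1$ is handled via the translucency trick above rather than by naively reading the second block, since that would place the wrong block on top of the pushdown), so that the failure of closure cannot be dismissed as a mere signature mismatch. One also relies on the fact that Theorem~\ref{theo:det-in-ndet} excludes acceptance of $L$ under \emph{every} signature, which its case analysis over the placements of $a$ and $b$ provides.
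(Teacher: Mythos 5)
Your proof is correct and takes essentially the same route as the paper: the same witness languages $L_1$ and $L_2$ with deterministic input-driven acceptors over a common signature, non-closure under union via Theorem~\ref{theo:det-in-ndet}, and non-closure under intersection deduced by De Morgan's law from the complementation closure of Proposition~\ref{prop:det-closed-complement}. The only cosmetic difference is your mirrored signature $\Sigma_D=\{b\}$, $\Sigma_R=\{a\}$, $\Sigma_N=\{\border\}$ (the paper pushes on $a$ and pops on $b$), and both choices yield correct deterministic acceptors.
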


\begin{proof}
We use the languages
$
 L_1= \{\, b^n\border b^m \shuffle a^n\mid m,n\geq 1\,\}
$
and
$
 L_2 = \{\, b^m\border b^n \shuffle a^n \mid m,n\geq 1\,\}
$ 
introduced in the proof of Theorem~\ref{theo:det-in-ndet}.
Each language can be accepted by a $\didpdawtl$ with signature $\Sigma_D=\{a\}$, $\Sigma_R=\{b\}$, and $\Sigma_N=\{\border\}$.
The rough idea for a $\didpdawtl$ $M_1$ accepting $L_1$ is to push 
in a first phase all $a$'s while symbols $b$ and $\border$ are translucent.
In a second phase, the first block of $b$'s is matched against the pushdown store and the second block of $b$'s is in
fact ignored since $M_1$ may pop from the empty pushdown only.

The rough idea for a $\didpdawtl$ $M_2$ accepting $L_2$ is to consume all
$b$'s up to and including the symbol $\border$ in a first phase. At the end of
this phase the pushdown store is empty. In a second phase, all symbols~$b$ are
translucent and the $a$'s are consumed and pushed. Finally, in a third phase,
the remaining~$b$'s from the input are matched against the pushdown store.

Since it is shown in Theorem~\ref{theo:det-in-ndet} that the union $L_1 \cup L_2$ is not accepted by any
$\nrdidpdawtl$, we obtain the non-closure under union for $\lfam(\didpdawtl)$ as well as for $\lfam(\nrdidpdawtl)$, 
even if the given automata have identical signatures. 
Since both families are closed under complementation by Proposition~\ref{prop:det-closed-complement},
we also obtain the non-closure under intersection for $\lfam(\didpdawtl)$ as well as for $\lfam(\nrdidpdawtl)$.
\end{proof}

For the nondeterministic families we already know the non-closure under complementation.
We now show that the nondeterministic returning class is not closed under intersection even with regular languages.
To this end, we use the result of Proposition~\ref{prop:letter-equivalent} stating that
from any given $\nidpdawtl$~$M$ we can effectively construct an $\npda$ $M'$ such that
$L(M')\subseteq L(M)$ and $L(M')$ is letter-equivalent to~$L(M)$.
Since the context-sensitive language $L_{abc} =\{\, a^n b^n c^n \mid n \ge 0 \,\}$ does not contain any 
letter-equivalent context-free sub-language, we can conclude that $L_{abc} \not\in \lfam(\nidpdawtl)$.
On the other hand, $L_{abc}=L_{a,b,c} \cap a^*b^*c^*$ with $L_{a,b,c} =\{\, w \in \{a,b,c\}^* \mid |w|_a=|w|_b=|w|_c \,\}$
and $L_{a,b,c}$ as well as the regular language $a^*b^*c^*$ can be accepted by $\nidpdawtl$ having
identical signatures. Hence, we obtain the following proposition.

\begin{proposition}\label{prop:ndet-ret-notclosed-union}
The language family $\lfam(\nidpdawtl)$ is neither closed under intersection
nor under intersection with regular languages.
\end{proposition}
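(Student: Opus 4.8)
The plan is to realize $L_{abc}=\{\,a^nb^nc^n\mid n\ge 0\,\}$ as the intersection $L_{a,b,c}\cap a^*b^*c^*$ of two languages that both lie in $\lfam(\nidpdawtl)$ and admit a \emph{common} signature, while arguing separately that $L_{abc}$ itself is \emph{not} in $\lfam(\nidpdawtl)$. First I would exhibit an $\nidpdawtl$ for $L_{a,b,c}=\{\,w\in\{a,b,c\}^*\mid |w|_a=|w|_b=|w|_c\,\}$ that never touches its pushdown, so that its signature is entirely free and can be chosen to coincide with that of the regular language $a^*b^*c^*$ (which, being regular, can likewise be accepted without using the pushdown and hence under an arbitrary signature). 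The automaton for $L_{a,b,c}$ works in rounds, cycling through three states $s_a,s_b,s_c$: in $s_a$ the letters $b,c$ are translucent and the automaton deletes one $a$; in $s_b$ the letters $a,c$ are translucent and it deletes one $b$; in $s_c$ the letters $a,b$ are translucent and it deletes one $c$. Thus each round removes one occurrence of each letter. If the supply of $b$'s or $c$'s is exhausted prematurely, that is, the endmarker is reached in $s_b$ or $s_c$, the transition is undefined and the input is rejected; if the $a$'s are exhausted first (endmarker reached in $s_a$), the automaton switches to a verification state $v$ with $\tau(v)=\emptyset$, which accepts on the endmarker but has no transition on any remaining, now visible, $b$ or $c$. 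Acceptance therefore occurs exactly when all three counts agree, giving $L(M)=L_{a,b,c}$.

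Next I would establish $L_{abc}\notin\lfam(\nidpdawtl)$. Assuming the contrary, let $L_{abc}=L(M)$ for some $\nidpdawtl$ $M$. By Proposition~\ref{prop:letter-equivalent} there is an $\npda$ $M'$ with $L(M')\subseteq L(M)=L_{abc}$ such that $L(M')$ is letter-equivalent to $L_{abc}$; in particular $L(M')$ is context-free and $\Psi(L(M'))=\Psi(L_{abc})=\{\,(n,n,n)\mid n\ge 0\,\}$.

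The hard part, though short, is to turn letter-equivalence into an actual equality of languages. The key observation is that $L_{abc}$ is \emph{thin}: each Parikh vector $(n,n,n)$ is the image of the single word $a^nb^nc^n$. Since $L(M')\subseteq L_{abc}$ realizes every vector $(n,n,n)$, it must contain $a^nb^nc^n$ for every $n$, whence $L(M')=L_{abc}$. But $L_{abc}$ is not context-free, contradicting the context-freeness of $L(M')$; hence $L_{abc}\notin\lfam(\nidpdawtl)$. Combining the two parts, $L_{a,b,c}$ and $a^*b^*c^*$ belong to $\lfam(\nidpdawtl)$ under identical signatures while $L_{a,b,c}\cap a^*b^*c^*=L_{abc}\notin\lfam(\nidpdawtl)$, which yields non-closure under intersection; and since $a^*b^*c^*$ is regular, this is in fact non-closure under intersection with regular languages.
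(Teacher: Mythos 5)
Your proof is correct and takes essentially the same route as the paper: the same decomposition $L_{abc}=L_{a,b,c}\cap a^*b^*c^*$ with both factors accepted by $\nidpdawtl$s sharing a signature (possible because neither automaton needs to consult its pushdown), and the same appeal to Proposition~\ref{prop:letter-equivalent} to conclude $L_{abc}\notin\lfam(\nidpdawtl)$. Your explicit ``thinness'' argument merely spells out why $L_{abc}$ contains no letter-equivalent context-free sublanguage, a fact the paper asserts without further detail.
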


It remains an open question whether or not the family $\lfam(\nidpdawtl)$ is closed under union and
whether or not the family $\lfam(\nrnidpdawtl)$ is closed under union or intersection.
Obviously, we obtain the closure under union for both families if the signatures are compatible.
However, we strongly conjecture non-closure in all other cases.

\section{Decidability Questions}\label{sect:deci}

In this section, we investigate decidability questions such as, for example, emptiness, finiteness, universality, inclusion,
equivalence, and regularity for our introduced input-driven variants of pushdown automata with translucent letters. 
These decidability questions have already been investigated for deterministic and nondeterministic finite automata with
translucent letters in~\cite{Nagy:2012:ocdsosdrawws,Nagy:2013:gdcdsosrawws1} where some partial results have been obtained. 
For returning deterministic and nondeterministic finite automata with translucent letters the questions of emptiness and finiteness 
are decidable. In addition, universality is decidable in the deterministic case. Inclusion is already undecidable for returning
deterministic finite automata with translucent letters. This negative result carries over to all other models in the returning and/or
nondeterministic case. For returning nondeterministic finite automata with translucent letters the questions of equivalence
and regularity are undecidable and carry over to non-returning nondeterministic finite automata with translucent letters as well.
Here, we study these questions for pushdown automata with translucent letters. We note that some of the undecidability results
obtained for finite automata with translucent letters carry over to pushdown automata with translucent letters. However, we show
here the non-semidecidability of the problems and, in addition, the non-semidecidability of universality for nondeterministic
input-driven pushdown automata in the returning and non-returning case. We start with decidable questions and
show that some questions are decidable for returning pushdown automata with translucent letters.

\begin{theorem}\label{theo:dec:univ}
For $\didpdawtl$ or $\dpdawtl$ as input,
the problems of testing emptiness, finiteness, and universality
are decidable.
For $\nidpdawtl$ as input, the problems of testing emptiness and finiteness
are decidable.
\end{theorem}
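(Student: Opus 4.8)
The plan is to reduce each decidability question to the corresponding question for ordinary nondeterministic pushdown automata, exploiting Proposition~\ref{prop:letter-equivalent} together with the structural observation that the returning mode does not essentially increase the power beyond what a pushdown automaton can check. First I would settle \textbf{emptiness} uniformly for all the mentioned models. For emptiness we only need to know whether \emph{some} word is accepted, and Proposition~\ref{prop:letter-equivalent} yields, from a given $\nidpdawtl$ $M$, an $\npda$ $M'$ with $L(M')\subseteq L(M)$ and $L(M')$ letter-equivalent to $L(M)$; in particular $L(M')=\emptyset$ if and only if $L(M)=\emptyset$. Since emptiness is decidable for $\npda$, emptiness is decidable for $\nidpdawtl$, and a fortiori for the deterministic subclass $\didpdawtl$. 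The same reduction disposes of emptiness for the returning $\dpdawtl$ model, since the letter-equivalence construction of Proposition~\ref{prop:letter-equivalent} (adapted from~\cite{nagy:2011:fsawtl}) does not rely on the input-driven restriction.

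Next I would treat \textbf{finiteness}. Here letter-equivalence again suffices because a language over a fixed alphabet is finite if and only if it contains only finitely many Parikh images, which happens if and only if its letter-equivalent copy is finite; thus $L(M)$ is finite if and only if $L(M')$ is finite, and finiteness is decidable for context-free languages via the pumping lemma or a cycle analysis of the associated grammar. This single argument covers finiteness for $\nidpdawtl$, $\didpdawtl$, and $\dpdawtl$ simultaneously.

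The genuinely harder part is \textbf{universality} for the deterministic returning models $\didpdawtl$ and $\dpdawtl$, since letter-equivalence is too coarse for universality: a proper sublanguage can be letter-equivalent to $\Sigma^*$. The plan is instead to use closure under complementation, which is available for $\lfam(\didpdawtl)$ by Proposition~\ref{prop:det-closed-complement} (and for $\dpdawtl$ by~\cite{kutrib:2024:dpdawtl:proc}). Given a deterministic $M$, I construct an automaton $M^{c}$ of the same type with $L(M^{c})=\compl{L(M)}$, and then observe that $L(M)=\Sigma^*$ if and only if $L(M^{c})=\emptyset$; since emptiness is already decided above, universality follows. The main obstacle I anticipate is making the complementation construction sufficiently explicit and effective in the returning deterministic setting, and in particular handling the interaction between the translucency mapping and the bottom-of-pushdown behavior so that rejecting and undefined computations are correctly complemented; once Proposition~\ref{prop:det-closed-complement} is in hand as an effective construction, the reduction to emptiness is routine. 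Note that universality is deliberately \emph{not} claimed for the nondeterministic $\nidpdawtl$ model, precisely because the nondeterministic families are not closed under complementation (Theorem~\ref{theo:nondet-not-closed-complement}), so this last reduction is unavailable there.
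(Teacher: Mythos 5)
Your proposal is correct and follows exactly the route the paper sets up (the paper states Theorem~\ref{theo:dec:univ} without an explicit proof): emptiness and finiteness reduce via Proposition~\ref{prop:letter-equivalent} to the corresponding decidable problems for \npda{}s, using that letter-equivalence preserves both emptiness and finiteness, while universality for the deterministic returning models reduces to emptiness via effective closure under complementation (Proposition~\ref{prop:det-closed-complement}, respectively~\cite{kutrib:2024:dpdawtl:proc} for \dpdawtl{}). The one point where Proposition~\ref{prop:letter-equivalent} as stated does not literally apply is \dpdawtl{} (which is not input-driven, hence not a subclass of \nidpdawtl{}), and you handle this correctly by observing that the letter-equivalent sublanguage construction does not depend on the input-driven restriction, only on the returning mode.
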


Next, we switch to undecidability results and we will show, in particular, the non-semidecidability of some problems.
To prove these results we use the technique of valid 
computations of Turing machines~\cite{Hartmanis:1967:cfltmc}. 
It suffices to consider deterministic Turing machines with one single
read-write head and one single tape whose space is fixed by the length of the
input, that is, so-called linear bounded automata ($\lba$).
Without loss of generality and for technical reasons, we assume that the $\lba$s
can halt only after an odd number of moves, accept by halting, and make
at least three moves. 
A valid computation is a string built from a sequence of configurations
passed through during an accepting computation. 

Let $Q$ be the state set of some $\lba$~$M$,
where $q_0$ is the initial state, 
$T \cap Q = \emptyset$ is the tape alphabet containing the endmarkers
$\underline{\leftend}$ and $\underline{\rightend}$, and $\Sigma\subset T$ is the input
alphabet.
A configuration of~$M$ can be written as a string of the form
$\underline{\leftend} T^*QT^*\underline{\rightend}$ such that, 
$\underline{\leftend} t_1t_2\cdots t_i q t_{i+1} \cdots t_n\underline{\rightend}$ is used to express
that $\underline{\leftend} t_1t_2\cdots t_n\underline{\rightend}$ is the tape inscription, $M$ is in state $q$, 
and is scanning tape symbol $t_{i+1}$.

The set of valid computations $\valc(M)$ is now defined to be 
the set of words having the form 
$
 w_0\border w_2\border \cdots \border w_{2m} c w^R_{2m+1} \border w^R_{2m-1} \border \cdots
\border w^R_3 \border w^R_1,
$
where $\border, c \notin T\cup Q$,
$w_i \in \underline{\leftend} T^*QT^*\underline{\rightend}$ are configurations of~$M$,
$w_0$ is an initial configuration from $\underline{\leftend}q_0\Sigma^*\underline{\rightend}$,
$w_{2m+1}$ is a halting, that is, an accepting configuration, 
and $w_{i+1}$ is the successor configuration of~$w_i$ for $0 \le i \le 2m$.
The set of invalid computations $\invalc(M)$ is the defined as the complement of $\valc(M)$
with respect to the alphabet $T \cup Q \cup \{\border, c\}$.

To accept the set $\invalc(M)$ by an $\nidpdawtl$ we make some modifications.
Let $h'$ be a mapping that maps every symbol from $T \cup Q \cup \{\border\}$ to its primed version.
Similarly, let $h''$ be a mapping that maps every symbol from $T \cup Q \cup \{\border\}$ to its double-primed version.
We then define the set of valid computations $\valc'(M)$ to be 
the set of words of the form 
$
 w_0\border w_2\border \cdots \border w_{2m} c \left(h'(w^R_{2m+1} \border) \shuffle h''(w^R_{2m-1} \border \cdots
\border w^R_3 \border w^R_1)\right),
$
where $w_0\border w_2\border \cdots \border w_{2m} c w^R_{2m+1} \border w^R_{2m-1} \border \cdots \border w^R_3 \border w^R_1$
belongs to $\valc(M)$. The set of invalid computations $\invalc\,'(M)$ is then defined as the complement of $\valc'(M)$.

\begin{lemma}\label{lemma:invalc}
Let $M$ be an $\lba$. Then, an $\nidpdawtl$ accepting $\invalc\,'(M)$ can effectively be constructed.
\end{lemma}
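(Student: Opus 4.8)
The plan is to construct an $\nidpdawtl$ that accepts $\invalc\,'(M)$ by guessing a specific reason why a given input string fails to be a valid computation, and then verifying that guessed reason. Since $\invalc\,'(M)$ is the complement of $\valc'(M)$, a word is invalid if it violates at least one of the structural or dynamical constraints defining a valid computation. The machine nondeterministically selects which violation to check, so $\invalc\,'(M)$ is the union over all such checkers, and we rely on the initial nondeterministic branching (as in Example~\ref{exa:rep}) to choose among them. The key insight is that the $\shuffle$-encoding with the two disjoint primed alphabets $h'(\cdot)$ and $h''(\cdot)$ is precisely what enables an input-driven device with translucency to compare a configuration $w_{2i}$ against its alleged successor $w_{2i+1}^R$: by making one of the two primed alphabets translucent, the automaton can read and process just one of the two interleaved strands while looking through the other.

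First I would enumerate the conditions defining $\valc'(M)$ and split them into two groups. The \emph{format} conditions (the input has the shape $w_0\border w_2\border\cdots\border w_{2m} c (h'(\cdots)\shuffle h''(\cdots))$ with each $w_i$ a well-formed configuration in $\underline{\leftend}T^*QT^*\underline{\rightend}$, $w_0$ an initial configuration, $w_{2m+1}$ accepting, the single-primed part being exactly one reversed configuration, etc.) are regular, so their negation is regular and can be checked by an $\nidpdawtl$ that ignores its pushdown store and adopts any signature, exactly as $L'$ was handled in Example~\ref{exa:rep}. The genuinely nonregular conditions are the \emph{consecutiveness} checks: that $w_{i+1}$ is the successor configuration of $w_i$ under $M$'s transition relation, for each relevant $i$. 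The plan is to handle the negation of these: the automaton guesses an index $i$ and a tape position $j$ at which the successor relation is allegedly violated, then verifies the mismatch using the pushdown store to count off position $j$ and compare the local window of symbols (since $\lba$ transitions are local, a successor-relation violation is witnessed by a bounded neighborhood around some cell). I would set the signature so that the forward configurations push while the reversed, primed configurations pop, matching positions across a configuration and its reversed successor.

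The main obstacle will be arranging the translucency and the signature so that the comparison between an even configuration $w_{2i}$ and the reversed $h''(w_{2i+1}^R)$ sitting inside the shuffled tail works despite the interleaving of the single- and double-primed strands. Concretely, when the machine is checking a $w''$-configuration deep in the shuffled part, it must make the single-primed alphabet translucent so the $h'$-strand is skipped, and then read, push, and pop only the double-primed symbols in the correct left-to-right then right-to-left order; making the push/pop alignment respect reversal is what the separate reversed encoding in $\valc'(M)$ was designed to permit. I would verify that pushing when reading a symbol of $w_{2i}$ and popping when reading the matching symbol of $h''(w_{2i+1}^R)$ keeps the pushdown input-driven, which forces the two compared blocks to read in opposite directions and thereby line up the stack correctly. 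Once each individual checker is shown to be realizable as an $\nidpdawtl$ with a common treatment of the pushdown, the union over all format-violation and all consecutiveness-violation checkers yields $\invalc\,'(M)$, since the construction is effective and the number of checker types is finite.
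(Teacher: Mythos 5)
Your overall architecture matches the intended construction: a nondeterministic union of checkers, regular format violations handled by ignoring the pushdown, the signature with unprimed symbols pushing and primed symbols popping, reversal-based alignment of a configuration against its successor, and locality of the $\lba$ successor relation. However, the central comparison step, as you describe it, compares the \emph{wrong pair} of configurations. Because the automaton is input-driven, every unprimed symbol of the prefix must push, so after reading up to $c$ the pushdown holds, from top to bottom, $w_{2m}^R \border w_{2m-2}^R \border \cdots \border w_0^R$, and every \emph{visible} primed symbol read afterwards must pop. Hence the pairing between input configurations and pushdown configurations is forced entirely by the translucency regime: under your prescription (make the single-primed strand translucent, process only the double-primed symbols), the pops align $w_{2m-1}^R$ with $w_{2m}^R$, then $w_{2m-3}^R$ with $w_{2m-2}^R$, and in general $w_{2i+1}$ with $w_{2i+2}$ --- not with $w_{2i}$ as you claim. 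There is no independent freedom to ``count off position $j$'' with the pushdown; the counting \emph{is} this forced symbol-per-symbol alignment, and as described it is off by exactly one configuration for the check you say you are performing.

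So your single uniform checker realizes only the odd-to-even successor tests ($w_{2i+2}$ versus $w_{2i+1}$). The even-to-odd tests ($w_{2i+1}$ versus $w_{2i}$) require a different, two-phase regime: first make the \emph{double}-primed symbols translucent and consume the single-primed configuration $h'(w_{2m+1}^R\border)$ against the stack top $w_{2m}^R$, which shifts the alignment by exactly one configuration, and only then switch translucency and pop through the double-primed strand until the guessed (marked) configuration $w_{2i}$ surfaces on the stack. This is precisely why $\valc'(M)$ is encoded asymmetrically, with exactly one configuration single-primed and all others double-primed: the choice between consuming and skipping the $h'$-strand selects between the two alignments, and both are needed. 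A second, smaller omission: equality of the \emph{number} of configurations on the two sides of $c$ is not a regular property, so your format checkers do not cover it, and your mismatch-witness checkers do not catch it either unless they are additionally specified to accept whenever the pushdown empties before the input is exhausted or is non-empty at the endmarker; the paper devotes a separate pushdown-based checker (its second possibility) to exactly this case.
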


\begin{proof}
We sketch the construction of an $\nidpdawtl$ $M'$ accepting $\invalc\,'(M)$ whose signature is defined as
$\Sigma_D=T \cup Q \cup \{\border\}$, $\Sigma_N=\{c\}$, and $\Sigma_R=T' \cup T'' \cup Q' \cup Q'' \cup \{\border',\border''\}$.
To check whether an input $x$ belongs to $\invalc\,'(M)$, $M'$ guesses and tests one of the following four possibilities.
\begin{enumerate}
\item $x$ has the wrong format to belong to $\valc'(M)$.
\item $x$ has the correct format, but the number of configurations to the left of $c$ is different from the number of configurations
to the right of $c$.
\item $x=w_0\border w_2\border \cdots \border w_{2m} c \left(h'(w^R_{2m+1} \border) \shuffle h''(w^R_{2m-1} \border \cdots
\border w^R_3 \border w^R_1)\right)$, but $w_{2i+1}$ is not the successor configuration of $w_{2i}$ for some $0 \le i \le m$.
\item $x=w_0\border w_2\border \cdots \border w_{2m} c \left(h'(w^R_{2m+1} \border) \shuffle h''(w^R_{2m-1} \border \cdots
\border w^R_3 \border w^R_1)\right)$, but $w_{2i+2}$ is not the successor configuration of $w_{2i+1}$ for some $0 \le i \le m-1$.
\end{enumerate}
The first possibility can be tested by a finite automaton and, hence, by~$M'$ disregarding the actions on the pushdown store.
For the second possibility $M'$ acts as follows: it reads the input up to the middle marker~$c$ and pushes the input as it is on
the pushdown store. After reading the marker~$c$, $M'$ makes all symbols from $T'' \cup Q'' \cup \{\border''\}$ translucent and
pops for every input symbol from $T' \cup Q' \cup \{\border'\}$ a symbol from the pushdown store taking care that $\border'$ in the
input is only matched against $\border$ on the pushdown store. If an error is encountered in this phase, the input is accepted.
If $M'$ sees the endmarker, $M'$ reads the remaining input and pops for every input symbol from $T'' \cup Q'' \cup \{\border''\}$ 
a symbol from the pushdown store taking again care that $\border''$ in the input is only matched against $\border$ on the pushdown store. 
If an error is encountered in this phase, the input is accepted as well. If the input is read completely and the pushdown store is not
empty, or the pushdown store gets empty before the input is read completely, $M'$ accepts as well and rejects in all other cases.

\begin{table}[!b]
\begin{center}
\renewcommand{\arraystretch}{1.2}\setlength{\tabcolsep}{6pt}
\begin{tabular}{|c|c|c|c|c|c|c|}
\hline
\rowcolor[HTML]{EFEFEF} 
  Automata Family & {$\emptyset$} & {FIN} & ${\Sigma^*}$ &
  {$\subseteq$} & $=$ & REG\\
\hline\hline
\cellcolor[HTML]{EFEFEF}$\dpdawtl$   & \cyes & \cyes & \cyes &  \cno & \copen &\copen\\
\cellcolor[HTML]{EFEFEF}$\didpdawtl$   & \cyes & \cyes & \cyes & \cno & {\copen} &\copen\\
\cellcolor[HTML]{EFEFEF}$\nidpdawtl$   & \cyes & \cyes & \cnos & \cnos & {\cnos} &\cnos\\
\cellcolor[HTML]{EFEFEF}$\nrdpdawtl$   & \copen & \copen & \copen & \cno & {\copen} &\copen\\
\cellcolor[HTML]{EFEFEF}$\nrdidpdawtl$   & \copen & \copen & \copen & \cno & \copen &\copen\\
\cellcolor[HTML]{EFEFEF}$\nrnidpdawtl$   & \copen & \copen & \cnos & \cnos & {\cnos} &\cnos\\
\hline
\end{tabular}
\end{center}
\caption{A summary of decidability questions for the language families discussed in this paper.
The undecidable questions derived from finite automata with translucent letters are marked with~`\cno', whereas the non-semidecidable questions obtained
in this paper are marked with~`\cnos'.}
\label{tab:closure}
\end{table}

To test the third possibility $M'$ reads the input up to the middle marker~$c$ and pushes the input as it is on the pushdown store.
Additionally, at some point of time $M'$ guesses the index~$i$. Then, $M'$ pushes configuration~$w_{2i}$ with suitably marked symbols 
on the pushdown store and~$M'$ remembers the last three symbols read in its finite control until the state symbol of configuration~$w_{2i}$ 
is the middle one of these three. After reading the middle marker~$c$ the task is to identify configuration~$w_{2i+1}$ in the input
and to check that $w_{2i+1}$ is not the successor configuration of~$w_{2i}$. If the suitably marked configuration on the pushdown store
is the topmost one after reading~$c$, $M'$ makes all symbols from $T'' \cup Q'' \cup \{\border''\}$ translucent and
pops for every input symbol from $T' \cup Q' \cup \{\border'\}$ a symbol from the pushdown store verifying
that the current configuration is \emph{not} the reversal of the successor configuration of the configuration stored in the pushdown store. 
Both configurations differ only locally at the state symbol. But from the information remembered in the finite control, the differences 
can be computed and verified. If an error is encountered, the input is accepted and otherwise rejected.
If the suitably marked configuration on the pushdown store
is not the topmost one after reading~$c$, then $M'$ makes all symbols from $T'' \cup Q'' \cup \{\border''\}$ translucent and
pops for every input symbol from $T' \cup Q' \cup \{\border'\}$ a symbol from the pushdown store checking the correct length and format
as in the test of the second possibility. After this phase handling inputs from $T' \cup Q' \cup \{\border'\}$,
$M'$ reads the remaining input and pops for every input symbol from $T'' \cup Q'' \cup \{\border''\}$ a symbol from the pushdown store 
checking again the correct length and format as in the test of the second
possibility until the suitably marked symbols appear on
the pushdown store. In this case, $M'$ pops for every input symbol from $T'' \cup Q'' \cup \{\border''\}$ a symbol from the pushdown store 
verifying that the current configuration is \emph{not} the reversal of the successor configuration of the configuration stored in the pushdown store. 
Again, this can be computed and verified due to the information remembered in the finite control, since both configurations differ only 
locally at the state symbol. If an error is encountered, the input is accepted and otherwise rejected.

The idea to test the fourth possibility is in a first phase identical to the third possibility: 
$M'$ reads the input up to the middle marker~$c$ and pushes the input as it is on the pushdown store. Additionally,~$M'$ pushes 
configuration~$w_{2i+2}$ with suitably marked symbols and remembers the last three symbols read in its finite control 
until the state symbol of configuration~$w_{2i+2}$ is the middle one of these three.
After reading the middle marker~$c$ the task is to identify configuration~$w_{2i+1}$ in the input
and to check that~$w_{2i+1}$ is not the successor configuration of~$w_{2i+2}$. To this end, $M'$ makes all symbols from $T' \cup Q' \cup \{\border'\}$
translucent and pops for every input symbol from $T'' \cup Q'' \cup \{\border''\}$ a symbol from the pushdown store checking the correct length and format
as in the test of the second possibility until the suitably marked symbols appear on the pushdown store. 
In this case, $M'$ pops for every input symbol from $T'' \cup Q'' \cup \{\border''\}$ a symbol from the pushdown store 
verifying that the reversal of the successor configuration of the current configuration is \emph{not} the configuration stored in the pushdown store. 
Again, this can be computed and verified due to the information remembered in the finite control, since both configurations differ only 
locally at the state symbol. If an error is encountered, the input is accepted and otherwise rejected.
We note that it is implicitly detected by possibilities 3 and 4 if all configurations do not have the same length.
This completes the construction of the $\nidpdawtl$ $M'$ accepting $\invalc\,'(M)$.
\end{proof}

The fact that $\nidpdawtl$ accept the set of invalid computations of an $\lba$ is sufficient to obtain the next non-semidecidability
results.

\begin{theorem}\label{theo:undec:univ}
For $\nidpdawtl$ or $\nrnidpdawtl$ as input, the problems of testing universality, inclusion, equivalence, and regularity
are not semidecidable.
\end{theorem}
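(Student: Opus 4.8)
The plan is to reduce the emptiness problem for linear bounded automata to each of the four problems. The key tool is Lemma~\ref{lemma:invalc}, which effectively turns an $\lba$ $M$ into an $\nidpdawtl$ $M'$ with $L(M')=\invalc\,'(M)$. The central observation is that $\invalc\,'(M)=\Sigma^*$ if and only if $\valc'(M)=\emptyset$, which holds exactly when $M$ accepts no input, i.e. when $L(M)=\emptyset$. Now, membership for $\lba$s is decidable (an $\lba$ on a fixed input has a finite configuration space, so looping can be detected), hence non-emptiness is semidecidable by enumerating inputs and testing them; consequently emptiness is co-semidecidable, and since it is not recursive it is \emph{not} semidecidable. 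As $M\mapsto M'$ is a computable many-one reduction of $\lba$-emptiness to the universality problem for $\nidpdawtl$, universality is not semidecidable.

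Inclusion and equivalence then follow by the standard specialization. Universality is the special case of inclusion in which one argument is a fixed $\nidpdawtl$ accepting $\Sigma^*$ (a regular language, hence acceptable without using the pushdown store), and it is the special case of equivalence in which one argument is such an automaton for $\Sigma^*$. Since universality is already not semidecidable, both inclusion and equivalence inherit non-semidecidability.

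For regularity, the direction $L(M)=\emptyset$ again gives $L(M')=\Sigma^*$, which is regular. To secure the converse I would first replace $M$ by an $\lba$ $\hat M$ that accepts $w$ followed by $k$ copies of a fresh padding symbol, for every $k\geq 0$, whenever $M$ accepts $w$; this leaves emptiness invariant but forces $L(\hat M)$ to be infinite as soon as it is nonempty. When $L(\hat M)$ is infinite, $\valc'(\hat M)$ contains valid computations of unbounded length whose consecutive configurations must have equal length and obey the successor relation, so a pumping argument shows that $\valc'(\hat M)$, and therefore its complement $\invalc\,'(\hat M)=L(M')$, is not regular. Thus $L(M')$ is regular if and only if $L(M)=\emptyset$, yielding a computable reduction of $\lba$-emptiness to the regularity problem and hence its non-semidecidability.

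For the non-returning variant I would compose each construction with the effective simulation of a returning automaton by a non-returning one described at the end of Section~\ref{sec:prelim} (priming the states and making them see only the endmarker), producing an $\nrnidpdawtl$ accepting the same language $\invalc\,'(M)$; every reduction above then transfers verbatim. I expect the regularity case to be the main obstacle: when a deterministic $\lba$ accepts only finitely many words, $\valc'(M)$ is finite and its complement is regular, which would wreck the converse direction. This is exactly why the padding step turning $L(M)$ into an empty-or-infinite language, together with the ensuing pumping argument on unbounded valid computations, is the delicate part of the argument.
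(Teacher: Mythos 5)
Your proposal is correct and follows essentially the same route the paper intends: it combines Lemma~\ref{lemma:invalc} with the standard Hartmanis-style reduction from $\lba$ emptiness (which is co-semidecidable and undecidable, hence not semidecidable), specializes universality to inclusion and equivalence, handles regularity via the padding-plus-pumping refinement, and transfers everything to $\nrnidpdawtl$ through the generic returning-to-non-returning simulation from Section~\ref{sec:prelim}. This is precisely the argument the paper summarizes when it states that accepting $\invalc\,'(M)$ by an $\nidpdawtl$ suffices for the non-semidecidability results.
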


\bibliographystyle{eptcs}
\bibliography{paper}  

\end{document}